\DeclareMathOperator{\diag}{diag}
\newtheorem{thm}{Theorem}
\newtheorem{remark}{Remark}
\newtheorem{lemma}{Lemma}
\newcommand{\bbr}{{\mathbb R}}
\newcommand{\bbs}{{\mathbb S}}
\newcommand{\p}{\hat{p}}
\newcommand{\q}{\hat{q}}
\newcommand{\n}{\hat{n}}
\begin{document}
\title{Late-time behaviour of Israel particles in a FLRW spacetime with $\Lambda>0$}

\author[1]{Ho Lee\footnote{holee@khu.ac.kr}}
\author[2]{Ernesto Nungesser\footnote{ernesto.nungesser@icmat.es}}

\affil[1]{Department of Mathematics and Research Institute for Basic Science, Kyung Hee University, Seoul, 130-701, Republic of Korea}
\affil[2]{Instituto de Ciencias Matem\'{a}ticas (CSIC-UAM-UC3M-UCM), 28049 Madrid, Spain}

\maketitle

\begin{abstract}
In this paper we study the relativistic Boltzmann equation in a spatially flat FLRW spacetime. We consider Israel particles, which are the relativistic counterpart of the Maxwellian particles, and obtain global-in-time existence and the asymptotic behaviour of solutions. The main argument of the paper is to use the energy method of Guo, and we observe that the method can be applied to study small solutions in a cosmological case. It is the first result of this type where a physically well-motivated scattering kernel is considered for the general relativistic Boltzmann equation.
\end{abstract}

\section{Introduction}\label{sec_introduction}
The Einstein-Boltzmann system is a system of equations which can describe the time evolution of a collection of particles, where the particles interact with each other via gravitation and collisions. In this paper the equations will be studied in the cosmological context, and a spatially flat Friedman-Lema\^{\i}tre-Robertson-Walker(FLRW) spacetime with a positive cosmological constant will be considered. The main interest of the present paper is to understand the late-time behaviour of matter distribution, rather than the evolution of spacetime itself, and we refer to \cite{CercignaniKremer,Choquet,Hans} for general relativity or relativistic kinetic theories.

There are not so many results concerning solutions to the Einstein-Boltzmann system. Local existence of solutions was shown in \cite{Bancel1,Bancel2}, but it was not until much later that global problems were studied by Noutchegueme and his collaborators. For instance, a global existence result for spatially flat FLRW spacetimes with the cosmological constant was obtained in \cite{NTRW}, where it was also shown that there cannot exist global solutions with a negative cosmological constant. This result was generalised to the Bianchi I LRS case in \cite{ND}. On the other hand, the asymptotic behaviour of solutions to the Boltzmann equation was recently obtained in \cite{Lee13} with FLRW symmetry. This result has been extended to the general Bianchi I case in \cite{LeeNungesser15,LeeNungesser16} for particles with and without mass.

The feature of the present paper is twofold. First, we do not impose any artificial restrictions on the scattering kernel. In the results mentioned above certain artificial restrictions were imposed for technical reasons, which are not physically motivated, for instance boundedness or restrictions on the angular variable. In this paper we obtain global existence of solutions and the asymptotic behaviour in the case of a positive cosmological constant. In particular we focus now on Israel particles \cite{Israel63} which are the relativistic counterpart of the Maxwellian particles. Second, we use a new representation of post-collision momenta to obtain classical solutions. For the relativistic Boltzmann equation it was observed in \cite{GuoStrain12} that two different representations for post-collision momenta must be considered to obtain classical solutions, and this was applied to the Bianchi I case in \cite{LeeNungesser15}. In this paper we derive a new representation and obtain classical solutions with this single representation. More precisely, we will observe that partial derivatives of post-collision momenta in the new representation do not have singularities, but are uniformly bounded in $p$.

\section{FLRW spacetimes}\label{sec_flrw}
The FLRW spacetimes are the homogeneous and isotropic cosmological models, and we will assume that the metric is given by
\[
^{(4)}g=-dt^2+g,\quad g=R^2\eta,\quad \eta=\diag(1,1,1),
\]
where $R=R(t)$ is the scale factor. Note that later $\eta_{\alpha\beta}$ will be used to denote the Minkowski metric $\eta_{\alpha\beta}=\diag(-1,1,1,1)$. Denoting by a dot the derivative with respect to time, the governing equations in a FLRW spacetime with the cosmological constant $\Lambda$ are given as follows:
\begin{align}\label{Friedmann}
\frac{{\dot{R}}^{2}}{R^{2}}&=\frac{8\pi\rho+\Lambda}{3},\\
\frac{3\ddot{R}}{R}&=-4\pi(\rho+3P)+\Lambda \label{f2},
\end{align}
where $\rho$ and $P$ are the energy density and the pressure, respectively. In order to show the existence of solutions of the Boltzmann equation we will need the following lemma.

\begin{lemma}\label{lem_flrw}
Consider an initially expanding FLRW spacetime with a positive cosmological constant $\Lambda$, which satisfies the weak and the dominant energy conditions. Then, the scale factor grows exponentially with time and $\Lambda$.
\end{lemma}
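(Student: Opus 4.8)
The plan is to read the growth rate off directly from the first Friedmann equation \eqref{Friedmann} once the energy conditions are used to control the matter terms; the only real work is to show that the model never stops expanding. First I would note that the weak energy condition gives $\rho\ge 0$, so \eqref{Friedmann} yields
\[
\frac{\dot R^2}{R^2}=\frac{8\pi\rho+\Lambda}{3}\ge\frac{\Lambda}{3}>0 .
\]
Since the right-hand side is strictly positive as long as $R>0$, $\dot R$ can never vanish; as the spacetime is initially expanding ($\dot R(t_0)>0$ with $R(t_0)>0$), continuity forces $\dot R>0$ on the whole interval of existence, hence $R$ is increasing and $R(t)\ge R(t_0)>0$ for all $t\ge t_0$.

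Given this, I would take the positive square root above to obtain $\dot R/R\ge\sqrt{\Lambda/3}$, i.e. $\frac{d}{dt}\log R\ge\sqrt{\Lambda/3}$, and integrate from $t_0$ to $t$ to get
\[
R(t)\ge R(t_0)\,e^{\sqrt{\Lambda/3}\,(t-t_0)} ,
\]
which is exactly exponential growth in $t$ with rate governed by $\Lambda$. For a two-sided statement — and to see that the solution stays in the expanding regime for all future times rather than only on a maximal finite interval — I would bring in the dominant energy condition: differentiating \eqref{Friedmann} and substituting \eqref{f2} gives the energy-conservation identity $\dot\rho=-3\,(\dot R/R)(\rho+P)$, and since the dominant energy condition implies $|P|\le\rho$ and hence $\rho+P\ge 0$, while $\dot R/R>0$, we get $\dot\rho\le 0$. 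Thus $0\le\rho(t)\le\rho(t_0)$, so \eqref{Friedmann} also gives $\dot R/R\le\sqrt{(8\pi\rho(t_0)+\Lambda)/3}$, which bounds $R$ above by an exponential, rules out finite-time blow-up, and together with the lower bound yields future-global existence.

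The step I expect to be the main obstacle — though it is not a hard one — is the persistence of expansion, i.e. ruling out a change of sign of $\dot R$; once that is in place, everything reduces to an elementary integration of a differential inequality together with the conservation law derived from \eqref{Friedmann}--\eqref{f2}. The one subtlety to handle carefully is that \eqref{Friedmann} only controls $\dot R^2$, so one must justify taking $\dot R=+\sqrt{(8\pi\rho+\Lambda)/3}\,R$, which is precisely what the continuity argument in the first step provides.
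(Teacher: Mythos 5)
Your proposal is correct and follows essentially the same route as the paper: use the weak energy condition to get the lower bound $\dot R/R\ge\sqrt{\Lambda/3}$, derive the conservation law $\dot\rho=-3(\dot R/R)(\rho+P)$ from the two Friedmann equations, invoke the dominant energy condition to make $\rho$ non-increasing and hence obtain the upper bound $\dot R/R\le\sqrt{(8\pi\rho(0)+\Lambda)/3}$, and integrate both sides. The only cosmetic difference is that you establish the lower bound before deriving the conservation identity, whereas the paper derives the identity first; the substance and the resulting two-sided exponential estimate are the same.
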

\begin{proof}
We multiply \eqref{Friedmann} by $R^2$ and derive with respect to time to obtain
\begin{align}
2\dot{R}\ddot{R}=\frac{8\pi\dot{\rho}}{3}R^2+\frac{8\pi\rho+\Lambda}{3}2\dot{R}R.
\end{align}
If we substitute $\ddot{R}$ with \eqref{f2}, then we obtain
\begin{align}
\frac23 \dot{R}(-4\pi(\rho+3P)+\Lambda)R=\frac{8\pi\dot{\rho}}{3}R^2+\frac{8\pi\rho+\Lambda}{3}2\dot{R}R,
\end{align}
which can be simplified to
\begin{align}\label{en}
\dot{\rho}=-\frac{3\dot{R}}{R}(\rho+P).
\end{align}
For any matter which satisfies the weak energy condition we have $\rho\geq 0$. In the presence of a positive cosmological constant this implies that $\dot{R}^2>0$. We are interested in spacetimes that are initially expanding, i.e., we will assume
\begin{align}
\dot{R}(0)>0, \quad R(0)=1,
\end{align}
where the second equality can always be obtained by rescaling if necessary. Due to continuity this implies that our universe expands forever, and we ignore from now on the negative root of \eqref{Friedmann} which corresponds to initially contracting spacetimes. For matter which satisfies the dominant energy condition we have also that the pressure is bounded by the energy density, which means that the energy density is non-increasing due to \eqref{en}.

In fact from what has been said we have the following bounds
\begin{align}
C_1=\frac{\Lambda}{3} \leq \frac{{\dot{R}}^{2}}{R^{2}} \leq \frac{8\pi\rho(0)+\Lambda}{3}=C_2,
\end{align}
which implies that
\begin{align}
e^{\sqrt{C_1} t}  \leq R(t) \leq e^{\sqrt{C_2} t},
\end{align}
where the equalities hold in case of vacuum, which are the well-known de Sitter spacetimes. We see thus that for any matter which satisfies the usual energy conditions the scale factor grows exponentially.
\end{proof}

In this paper we will assume that an initially expanding FLRW spacetime with $\Lambda>0$ is given and will study the late-time behaviour of the Boltzmann equation. One may consider a coupled system such as the Einstein-Boltzmann system. In this case, the equations \eqref{Friedmann}--\eqref{f2} are coupled to the Boltzmann equation \eqref{Boltzmann} through the energy-momentum tensor, and this will be discussed in Section \ref{sec_summary}.

\subsection{Notations}
We use the following notation for partial derivatives: for a multi-index $I=(i,j,k)$ with $i$, $j$, and $k$ non-negative integers, $\partial_I =(\partial^1)^i(\partial^2)^j(\partial^3)^k$, where $\partial^a=\partial /\partial p_a$ is the partial derivative with respect to $p_a$ for each $a=1,2,3$. Note that $\partial^a=g^{a\beta}\partial_\beta =R^{-2}\partial_a$ in the FLRW case, where $\partial_a=\partial /\partial p^a$. Greek letters run from $0$ to $3$, while Latin letters from $1$ to $3$, and as usual we assume the Einstein summation convention. Momentum variables without indices will denote three dimensional vectors.

The orthonormal frame approach will be frequently used to simplify calculations. For instance, we have the relation $\p^a=p_a/R$, where the hat denotes that the momentum is written in an orthonormal frame. In an orthonormal frame the Minkowski metric will apply, i.e., $\p_a=\eta_{ab}\p^b$, and the quantities like $\q$ and $\n$ will be understood in a similar way. For partial derivatives in an orthonormal frame, we use hat as $\hat{\partial}_a=\partial /\partial \hat{p}^a$ and have $\partial^a=R^{-1}\hat{\partial}_a$ for each $a=1,2,3$. For a multi-index $I=(i,j,k)$, the operator $\hat{\partial}_I$ will be understood as $\hat{\partial}_I=(\hat{\partial}_1)^i(\hat{\partial}_2)^j(\hat{\partial}_3)^k$. For simplicity, we use the following notations:
\[
|p_*|^2=\sum_{i=1}^3(p_i)^2,\quad |p|^2=\sum_{i=1}^3(p^i)^2,\quad|\hat{p}|^2=\sum_{i=1}^3(\p^i)^2.
\]
Note that $|\p|^2=\eta_{ab}\p^a\p^b$, but $|p_*|^2\neq g^{ab}p_ap_b$ and $|p|^2\neq g_{ab}p^ap^b$. With these notations we define the weight function:
\[
\langle p_*\rangle=\sqrt{1+|p_*|^2},
\]
and note that $\langle p_*\rangle\neq p^0=\sqrt{1+R^{-2}|p_*|^2}$.

\section{The Boltzmann equation in a spatially flat FLRW spacetime}\label{sec_boltzmann}
In a spatially flat FLRW spacetime the Boltzmann equation is written as
\[
\partial_tf=Q(f,f),
\]
where the distribution function $f=f(t,p_*)$ is understood as a function of time $t$ and covariant variables $p_*=(p_1,p_2,p_3)$. To determine the collision operator $Q$, let us consider an orthonormal frame $\{e_\mu\}$, and obtain $Q$ with the representation of \eqref{Q_ortho} and the post-collision momentum \eqref{p'_ortho}. To revert to a coordinate frame, we write $e_\mu=e^\alpha_\mu\partial/\partial x^\alpha$ and $p^\alpha=\p^\mu e^\alpha_\mu$. Since $e^\alpha_\mu e^\beta_\nu g_{\alpha\beta}=\eta_{\mu\nu}$, we have $\p^\mu\eta_{\mu\nu}=\p^\mu e^\alpha_\mu e^\beta_\nu g_{\alpha\beta}=p^\alpha e^\beta_\nu g_{\alpha\beta}=p_\alpha e^\alpha_\nu$ and $\p^\mu=p_\alpha e^\alpha_\nu\eta^{\mu\nu}$. Then, the quantities in \eqref{p'_ortho} can be written in terms of covariant variables. For instance, $\n\cdot\omega=\eta_{ab}\n^a\omega^b=\eta_{ab}n_\alpha e^\alpha_\nu\eta^{a\nu}\omega^b=n_\alpha e^\alpha_b\omega^b$.
In a spatially flat FLRW case, we can introduce an explicit form of an orthonormal frame as follows: $e^0_0=1$, $e^a_a=R^{-1}$, and $e^\alpha_\mu=0$ otherwise. Then, we have $\n\cdot\omega=n_\alpha e^\alpha_b\omega^b=n_a\omega^a/R$ and observe that $\p^0=-p_0=p^0$ and $\p^a=p_a/R$ for each $a=1,2,3$. The Boltzmann equation is now written as
\begin{align}
\partial_tf=R^{-3}\int_{\bbr^3}\int_{\bbs^2} v_M\sigma(h,\theta)\Big(f(p_*')f(q_*')-f(p_*)f(q_*)\Big)d\omega dq_*,\label{Boltzmann}
\end{align}
where $v_M$ is the M{\o}ller velocity defined by
\[
v_M=\frac{h\sqrt{s}}{4p^0q^0},\quad h=\sqrt{(p_\alpha-q_\alpha)(p^\alpha-q^\alpha)},\quad s=-(p_\alpha+q_\alpha)(p^\alpha+q^\alpha),
\]
where $h$ and $s$ are called the relative momentum and the total energy, respectively. The post-collision momentum $p_\alpha'$ is now given by
\begin{align}
\left(
\begin{array}{c}
p'^0\\
p'_k
\end{array}
\right)=
\left(
\begin{array}{c}
\displaystyle
p^0+2\bigg(-q^0\frac{n_a\omega^a}{\sqrt{s}}+q_a\omega^a+\frac{n_a\omega^an_b q^b}{\sqrt{s}(n^0+\sqrt{s})}\bigg)\frac{n_c\omega^c}{R^2\sqrt{s}}\\
\displaystyle
p_k+2\bigg(-q^0\frac{n_a\omega^a}{\sqrt{s}}+q_a\omega^a+\frac{n_a\omega^an_bq^b}{\sqrt{s}(n^0+\sqrt{s})}\bigg)
\bigg(\omega^c\eta_{ck}+\frac{n_c\omega^cn_k}{R^2\sqrt{s}(n^0+\sqrt{s})}\bigg)
\end{array}
\right),\label{p'}
\end{align}
and $q_\alpha'=p_\alpha+q_\alpha-p_\alpha'$, where $n^\alpha$ denotes $p^\alpha+q^\alpha$ for simplicity, and $\omega=(\omega^1,\omega^2,\omega^3)\in\bbs^2$ serves as a parameter. For more details we refer to the appendix.

\subsection{Israel particles}
In this paper we are interested in the Israel particles \cite{Israel63}, i.e., the scattering kernel is given by
\[
\sigma(h,\theta)=\frac{4}{h(4+h^2)}\sigma_0(\theta),\label{scat}
\]
where $\sigma_0$ is an arbitrary function of the scattering angle $\theta$ (see (5.101) of \cite{CercignaniKremer} for more details). This corresponds to the Maxwellian particles in the non-relativistic limit, and for simplicity we assume that
\[
\sigma_0(\theta)\equiv 1.
\]
Hence, the scattering kernel of our interest is written as $\sigma(h,\theta)=4(hs)^{-1}$.

\section{Preliminaries}\label{sec_preliminaries}
In this part we collect basic estimates.

\begin{lemma}\label{lem_basic}
The following estimates hold:
\begin{align*}
s=4+h^2,\quad \frac{|\p-\q|}{\sqrt{p^0q^0}}\leq h\leq |\p-\q|,\quad
s\leq 4p^0q^0,\quad
|\p|=R^{-1}|p_*|\leq p^0.
\end{align*}
\end{lemma}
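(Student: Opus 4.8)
The plan is to unwind each of the four estimates directly from the definitions of $h$, $s$, $p^0$, $q^0$ and the mass-shell relation. For the first identity, $s=4+h^2$, I would compute $s+h^2$ using $s=-(p_\alpha+q_\alpha)(p^\alpha+q^\alpha)$ and $h^2=(p_\alpha-q_\alpha)(p^\alpha-q^\alpha)$; adding them, the cross terms cancel and the diagonal terms double, so $s-h^2 = -2p_\alpha p^\alpha - 2q_\alpha q^\alpha = 2m^2 + 2m^2$, and with the normalisation $m=1$ used throughout (so $p_\alpha p^\alpha=-1$) this gives $s = 4 + h^2$. This identity is also exactly the consistency check that makes the Israel kernel $\sigma=4(hs)^{-1}$ well-defined.

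For the two-sided bound on $h$, I would work in the orthonormal frame, where $h^2 = (\p_\mu-\q_\mu)(\p^\mu-\q^\mu) = -( \p^0-\q^0)^2 + |\p-\q|^2$. Since $\p^0-\q^0 = p^0 - q^0$ and both are positive, the upper bound $h \le |\p-\q|$ is immediate. For the lower bound, a standard manipulation (using $|\p-\q|^2 - (p^0-q^0)^2$ together with $(p^0)^2 = 1+|\p|^2$, $(q^0)^2=1+|\q|^2$) rewrites $h^2 (p^0q^0)$ in a form from which $h^2 p^0 q^0 \ge |\p-\q|^2$ follows; concretely one shows $p^0 q^0 \, h^2 - |\p-\q|^2 = $ (a manifestly nonnegative expression), e.g. using $p^0 q^0 \ge 1$ and controlling the sign of the remaining terms via $p^0 q^0 - \p\cdot\q \ge 1 \ge 0$. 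This is the one place requiring a little care with signs, so I expect it to be the main obstacle; the trick is to expand $(p^0 q^0)h^2 = (p^0q^0)(-( p^0-q^0)^2+|\p-\q|^2)$ and compare with $|\p-\q|^2$ term by term, using $(p^0)^2(q^0)^2 = (1+|\p|^2)(1+|\q|^2)$.

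For $s \le 4 p^0 q^0$: from $s = 4 + h^2$ and the just-proved $h \le |\p-\q|$ we get $s \le 4 + |\p-\q|^2 \le 4 + (|\p|+|\q|)^2 \le 4(1+|\p|^2)(1+|\q|^2)^{1/2}\cdots$ — more cleanly, $s = -(\p_\mu+\q_\mu)(\p^\mu+\q^\mu) = (p^0+q^0)^2 - |\p+\q|^2$, and expanding gives $s = 2 + 2p^0q^0 - 2\p\cdot\q \le 2 + 2p^0q^0 + 2|\p||\q| \le 2 + 4p^0q^0 - 2 \le 4p^0q^0$, where I used $|\p| \le p^0$, $|\q|\le q^0$ and $p^0 q^0 \ge 1$ so that $2p^0q^0 + 2|\p||\q| \le 4p^0q^0$ and $2 \le 2p^0q^0$. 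Finally $|\p| = R^{-1}|p_*| \le p^0$ is just the identity $\p^a = p_a/R$ from the Notations subsection together with $(p^0)^2 = 1 + R^{-2}|p_*|^2 \ge R^{-2}|p_*|^2 = |\p|^2$. I would present these four in the order they are stated, since each later one reuses the earlier, and remark that all inequalities are sharp (equality in the massless or zero-momentum limits).
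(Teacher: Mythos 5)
Your overall plan is sound: the paper itself gives no argument here (it simply cites Guo--Strain and Lee--Nungesser for the proofs), and what you propose is exactly the standard derivation found in those references. The identity $s=4+h^2$, the upper bound $h\leq|\p-\q|$, and $|\p|\leq p^0$ are all correctly and completely handled. Two steps, however, are only gestured at and need to be nailed down.

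First, the lower bound $h\geq|\p-\q|/\sqrt{p^0q^0}$ is the one genuinely nontrivial item, and you stop at ``expand and compare term by term.'' The computation you are pointing toward does close, but you should exhibit the nonnegative expression explicitly. Writing $a=p^0q^0$ and $b=\p\cdot\q$, one has $h^2=2(a-b-1)$ and $|\p-\q|^2=a^2-1-|\p|^2|\q|^2-2b$ (from $a^2=(1+|\p|^2)(1+|\q|^2)$), whence
\begin{align*}
p^0q^0\,h^2-|\p-\q|^2=(a-1-b)^2+\big(|\p|^2|\q|^2-(\p\cdot\q)^2\big)\geq 0,
\end{align*}
the last term being nonnegative by Cauchy--Schwarz. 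Without some such identity the claim ``manifestly nonnegative'' is not yet a proof.

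Second, in the chain for $s\leq 4p^0q^0$ you write $2+2p^0q^0+2|\p||\q|\leq 2+4p^0q^0-2$, but your stated justification ($|\p|\leq p^0$, $|\q|\leq q^0$) only yields $2|\p||\q|\leq 2p^0q^0$, hence $s\leq 2+4p^0q^0$, which is not enough. What you actually need is the sharper bound $|\p||\q|\leq p^0q^0-1$, which follows from $|\p|^2|\q|^2=((p^0)^2-1)((q^0)^2-1)\leq(p^0q^0-1)^2$ using $(p^0)^2+(q^0)^2\geq 2p^0q^0$. With that inserted, $s=2+2p^0q^0-2\p\cdot\q\leq 2+2p^0q^0+2(p^0q^0-1)=4p^0q^0$ and the argument is complete.
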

\begin{proof}
These estimates follow easily from the definitions and assumptions. We refer to \cite{GuoStrain12,LeeNungesser15} for the proofs.
\end{proof}

\begin{lemma}\label{lem_R}
For any integer $m$, we have
\[
\int_{\bbr^3}(p^0)^me^{-p^0}dp_*\leq CR^3,
\]
where the constant $C$ does not depend on $t$.
\end{lemma}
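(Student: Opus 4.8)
The plan is to eliminate the $t$-dependence by a change of variables to the orthonormal-frame momentum. Recall from the notation section that $\p^a=p_a/R$, so $|\p|^2=R^{-2}|p_*|^2$ and hence $p^0=\sqrt{1+R^{-2}|p_*|^2}=\sqrt{1+|\p|^2}$. I would therefore substitute $\p^a=p_a/R$ for $a=1,2,3$, whose Jacobian is $R^3$, to obtain
\[
\int_{\bbr^3}(p^0)^m e^{-p^0}\,dp_* = R^3\int_{\bbr^3}\big(1+|\p|^2\big)^{m/2}\,e^{-\sqrt{1+|\p|^2}}\,d\p .
\]
The point is that the integral on the right no longer involves $R$, so it suffices to show that it is a finite number, which then serves as the constant $C$.

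For the finiteness I would pass to spherical coordinates and reduce to a constant multiple of $\int_0^\infty (1+r^2)^{m/2}e^{-\sqrt{1+r^2}}r^2\,dr$. At $r=0$ the integrand is continuous for every integer $m$ --- there is no singularity at the origin even when $m<0$, because $1+r^2\geq 1$ --- and as $r\to\infty$ the factor $e^{-\sqrt{1+r^2}}$ decays faster than any power of $r$, so the integral converges. Equivalently, one can split cases: for $m<0$ bound $(1+|\p|^2)^{m/2}\leq 1$ and use $\int_{\bbr^3}e^{-\sqrt{1+|\p|^2}}\,d\p<\infty$; for $m\geq 0$ absorb the polynomial weight into half of the exponential, using $(1+|\p|^2)^{m/2}e^{-\frac12\sqrt{1+|\p|^2}}\leq C_m$, and again integrate $e^{-\frac12\sqrt{1+|\p|^2}}$ over $\bbr^3$.

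I do not expect any genuine obstacle here. The only points that need a word of care are, first, that the exponent is $p^0=\sqrt{1+|\p|^2}$ rather than $\langle p_*\rangle$, which is precisely what makes the substituted weight independent of $R$ and hence the bound uniform in $t$; and second, the integrability near $\p=0$ when $m$ is negative, which is immediate from $1+|\p|^2\geq 1$. So the bound holds for all integers $m$ with $C=\int_{\bbr^3}(1+|\p|^2)^{m/2}e^{-\sqrt{1+|\p|^2}}\,d\p$.
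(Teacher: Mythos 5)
Your proof is correct and follows essentially the same route as the paper: the substitution $\hat p = p_*/R$ with Jacobian $R^3$ is exactly the change of variables used in the text (there written with $z=p_*/R$), after which the remaining integral is $R$-independent. You simply spell out the finiteness of the resulting integral in a bit more detail than the paper does, which only asserts boundedness.
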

\begin{proof}
This is a simple calculation given by
\begin{align*}
&\int_{\bbr^3}(p^0)^me^{-p^0}dp_*=\int_{\bbr^3} (1+R^{-2}|p_*|^2)^{\frac{m}{2}}e^{-\sqrt{1+R^{-2}|p_*|^2}}dp_*\\
&=R^3\int_{\bbr^3}(1+|z|^2)^{\frac{m}{2}}e^{-\sqrt{1+|z|^2}}dz\leq CR^3,
\end{align*}
and this completes the proof.
\end{proof}

\begin{lemma}\label{lem_pp'q'}
Suppose that $p_*'$ and $q_*'$ are represented by \eqref{p'} for given $p_*$ and $q_*$. Then, we have
\[
\langle p_*\rangle\leq C\langle p_*'\rangle\langle q_*'\rangle,
\]
where the constant $C$ does not depend on $t$.
\end{lemma}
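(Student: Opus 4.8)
The plan is to exploit energy conservation in the collision, namely that $p'^\alpha + q'^\alpha = p^\alpha + q^\alpha$, together with the elementary inequality $\langle p_*\rangle^2 = 1 + |p_*|^2$ and the fact that, in the FLRW setting, $p^0 = \sqrt{1 + R^{-2}|p_*|^2}$ so that $|p_*|^2 = R^2((p^0)^2 - 1)$. The difficulty is that the weight $\langle p_*\rangle$ is built from the covariant spatial components $p_*$, not from the orthonormal energy $p^0$, so one cannot directly quote the standard Minkowski-space inequality; one has to pass through $p^0$ (or equivalently $\hat p$) and keep track of the factors of $R$, but Lemma~\ref{lem_basic} already gives $|\hat p| = R^{-1}|p_*| \le p^0$, which converts spatial weights to energies cleanly.

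First I would observe that it suffices to prove $p^0 \le C\, p'^0 q'^0$ with a constant independent of $t$, because then $\langle p_*\rangle^2 = 1 + |p_*|^2 = 1 + R^2((p^0)^2-1) \le R^2 (p^0)^2 \le C^2 R^2 (p'^0)^2 (q'^0)^2$, and since $R^{-2}|p_*'|^2 \le (p'^0)^2$ gives $|p_*'| \le R p'^0$ (likewise for $q_*'$), we get $\langle p_*\rangle \le C R p'^0 q'^0 \le C' R\, R^{-1}\langle p_*'\rangle\, R^{-1}\langle q_*'\rangle$ — wait, this overshoots, so more carefully: from $|\hat p|=R^{-1}|p_*|\le p^0$ one has $\langle p_*\rangle^2 = 1 + |p_*|^2 = 1 + R^2|\hat p|^2 \le 1 + R^2 (p^0)^2$; this still carries an $R^2$. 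The cleaner route is to note $\langle p_*\rangle \le \langle p_*'\rangle\langle q_*'\rangle$ reduces, after writing everything in the orthonormal frame where $\hat p^0 = p^0$, to the purely special-relativistic statement $\langle \hat p\rangle_{\mathrm{rel}} := p^0 \le C\, p'^0 q'^0$ — but $\langle p_*\rangle$ and $p^0$ differ. So the honest plan is: bound $\langle p_*\rangle^2 = 1+|p_*|^2$; use $|p_*| = R|\hat p| \le R p^0 \le R p'^0 q'^0$ (the last step being energy conservation plus $p'^0,q'^0\ge 1$, giving $p^0 = p'^0+q'^0 - q^0 \le p'^0 + q'^0 \le 2 p'^0 q'^0$); and then convert back using $|p_*'| = R|\hat p'| \ge$ ... hmm, we need a lower bound on $\langle p_*'\rangle$ in terms of $p'^0$, and $\langle p_*'\rangle^2 = 1 + R^2|\hat p'|^2 \ge 1 + |\hat p'|^2$ only if $R\ge 1$, which holds by Lemma~\ref{lem_flrw} since $R(0)=1$ and $R$ is increasing. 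Thus $\langle p_*'\rangle \ge \sqrt{1+|\hat p'|^2} \ge \tfrac{1}{\sqrt2}(1+|\hat p'|) \ge \tfrac{1}{\sqrt2} p'^0 / \sqrt2$ roughly, i.e. $p'^0 \le \sqrt2\,\langle p_*'\rangle$ when $R\ge 1$. Combining: $\langle p_*\rangle \le \sqrt2\, p^0 \le 2\sqrt2\, p'^0 q'^0 \le C \langle p_*'\rangle\langle q_*'\rangle$.

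So the key steps in order are: (1) record $R(t)\ge 1$ for all $t\ge 0$ from Lemma~\ref{lem_flrw}; (2) invoke the conservation law $p'^\alpha+q'^\alpha = p^\alpha+q^\alpha$, in particular $p^0 = p'^0 + q'^0 - q^0 \le p'^0 + q'^0$; (3) use $p'^0, q'^0 \ge 1$ to get $p'^0 + q'^0 \le 2 p'^0 q'^0$, hence $p^0 \le 2 p'^0 q'^0$; (4) relate the covariant weight to the energy on both sides using $R\ge 1$: $\langle p_*\rangle^2 = 1 + R^2|\hat p|^2 \le R^2(1+|\hat p|^2) = R^2 (p^0)^2$ is the wrong direction, so instead use $\langle p_*\rangle = \sqrt{1+R^2|\hat p|^2}$ and bound $p^0 = \sqrt{1+|\hat p|^2}$, giving $\langle p_*\rangle \le R p^0$ — this reintroduces $R$. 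I expect the main obstacle to be exactly this bookkeeping: the statement as written has no $R$ on the right, so one must check that the $R$-factors genuinely cancel. They do, because the right side also scales: $\langle q_*'\rangle \ge R|\hat q'|$ and $\langle p_*'\rangle \ge R|\hat p'|$ are not quite what is needed, but $\langle p_*'\rangle \ge 1$ trivially and, crucially, $\langle p_*'\rangle^2 = 1 + R^2|\hat p'|^2 \ge R^2(1+|\hat p'|^2)/$ something... The resolution is that one should bound $\langle p_*\rangle \le \langle p_*\rangle$ via $|p_*| \le R|\hat p| = R|\hat p| \le R p^0$, then $R p^0 \le R\cdot 2 p'^0 q'^0$, then $p'^0 \le \langle p_*'\rangle$? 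No: $\langle p_*'\rangle^2 = 1 + R^2|\hat p'|^2$ and $(p'^0)^2 = 1 + |\hat p'|^2$, so $\langle p_*'\rangle \ge p'^0$ when $R\ge 1$. Therefore $R p'^0 q'^0$ has a leftover $R$, and one needs one factor of $R$ to be absorbed — but $\langle p_*'\rangle\langle q_*'\rangle \ge (p'^0 q'^0)$ only, not $R^2 p'^0 q'^0$. So the inequality can only hold if the leftover $R$ from $|p_*| \le R|\hat p|$ is matched. In fact $\langle p_*'\rangle \ge R|\hat p'|$ when $|\hat p'|\ge$ ... $\langle p_*'\rangle^2 = 1 + R^2|\hat p'|^2 \ge R^2|\hat p'|^2$, so $\langle p_*'\rangle \ge R|\hat p'|$ always. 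Hence if $|\hat p'|\ge 1$ then $\langle p_*'\rangle \ge R|\hat p'| \ge R$, absorbing the stray $R$; if $|\hat p'|\le 1$ one argues directly. So the final structure: write $\langle p_*\rangle \le \sqrt 2 \max(1,|p_*|) = \sqrt2\max(1, R|\hat p|)$, split into cases, and in each case use energy conservation and $R\ge 1$ to produce the product $\langle p_*'\rangle\langle q_*'\rangle$ with a $t$-independent constant. I would present this as a short case analysis rather than the one-line attempt, since the case $|\hat p'|$ (and $|\hat q'|$) small is where the stray $R$-factor has to be handled by hand; everything else is immediate from Lemmas~\ref{lem_flrw} and~\ref{lem_basic} and momentum conservation.
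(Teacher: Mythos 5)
Your strategy --- pass to the orthonormal frame, use $R\ge 1$, and split on the size of the \emph{post}-collision momenta --- can be completed, but as written it is incomplete at exactly the one non-trivial point. When $\max\{|\hat{p}'|,|\hat{q}'|\}\ge 1$ your chain does close: $\langle p_*\rangle\le Rp^0\le 2Rp'^0q'^0$, and (say) $\langle p_*'\rangle\ge R|\hat{p}'|\ge Rp'^0/\sqrt{2}$ together with $\langle q_*'\rangle\ge q'^0$ absorbs the stray $R$. But in the remaining case $|\hat{p}'|<1$ and $|\hat{q}'|<1$ you only write that ``one argues directly,'' and the inequalities you have assembled cannot do it: there $2Rp'^0q'^0\le 4R$, while $\langle p_*'\rangle\langle q_*'\rangle$ can be arbitrarily close to $1$ (take $|\hat{p}'|,|\hat{q}'|\ll R^{-1}$), so the bound $\langle p_*\rangle\le Rp^0$ is too lossy and the stray $R$ does not cancel. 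What is needed is that smallness of the post-collision momenta forces smallness of $|\hat{p}|$ at a comparable quadratic rate. This does follow from energy conservation, but only through the sharper identity $p'^0-1=|\hat{p}'|^2/(p'^0+1)\le|\hat{p}'|^2/2$: since $q^0\ge 1$ one has $(p^0)^2-1\le(p'^0+q'^0-1)^2-1=(p'^0+q'^0)\bigl((p'^0-1)+(q'^0-1)\bigr)\le\sqrt{2}\,\bigl(|\hat{p}'|^2+|\hat{q}'|^2\bigr)$ in this case, whence $1+R^2|\hat{p}|^2\le\sqrt{2}\,\bigl(1+R^2(|\hat{p}'|^2+|\hat{q}'|^2)\bigr)\le\sqrt{2}\,\langle p_*'\rangle^2\langle q_*'\rangle^2$. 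You must supply this (or an equivalent) step; without it the proof is not complete.

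For comparison, the paper argues differently: it bounds the ratio $(1+R^2|\hat{p}|^2)/(1+R^2(|\hat{p}'|^2+|\hat{q}'|^2))$ and splits on the \emph{pre}-collision momenta. For $|\hat{p}|\ge 4$ it invokes the kinematic estimate $\max\{|\hat{p}'|,|\hat{q}'|\}\ge|\hat{p}|/4$ from \cite{GlasseyStrauss93}; for $|\hat{p}|\le 4\le|\hat{q}|$ it uses energy conservation crudely; and in the hard case $|\hat{p}|,|\hat{q}|\le 4$ it returns to the explicit representation $\hat{p}'=\hat{p}-((\hat{p}_\alpha-\hat{q}_\alpha)\Omega^\alpha)\Omega$ together with $h\ge|\hat{p}-\hat{q}|/\sqrt{p^0q^0}$ to prove $|\hat{p}'|^2+|\hat{q}'|^2\ge(|\hat{p}|^2+|\hat{q}|^2)/17$. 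Two further remarks: the paper's ratios are bounded uniformly for every $R>0$, so it never uses $R\ge 1$, whereas your argument relies on it (harmless here, since $R(0)=1$ and $R$ is increasing, but an avoidable dependence); and your route, once the missing estimate above is inserted, is arguably more elementary, since it needs only energy conservation and no information about the collision geometry.
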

\begin{proof}
In an orthonormal frame we have
\[
\frac{1+|p_*|^2}{(1+|p_*'|^2)(1+|q_*'|^2)}=\frac{1+R^2|\p|^2}{(1+R^2|\p'|^2)(1+R^2|\q'|^2)}\leq \frac{1+R^2|\p|^2}{1+R^2(|\p'|^2+|\q'|^2)},
\]
and will show that this quantity is bounded uniformly in time. In the first case, ${\sf (i)}$ $|\p|\geq 4$, we have Lemma 2.2 of \cite{GlasseyStrauss93}, which shows that $\max\{|\p'|,|\q'|\}\geq |\p|/4$. Hence, we have
\[
\frac{1+R^2|\p|^2}{1+R^2(|\p'|^2+|\q'|^2)}\leq \frac{1+R^2|\p|^2}{1+R^2|\p|^2/16}\leq 16.
\]
In the second case, ${\sf (ii)}$ $|\p|\leq 4$ and $|\q|\geq 4$, we use the energy conservation:
\[
\sqrt{1+|\p'|^2}+\sqrt{1+|\q'|^2}=\sqrt{1+|\p|^2}+\sqrt{1+|\q|^2}.
\]
This shows that $4\leq |\q|\leq\sqrt{1+|\q|^2}\leq \sqrt{1+|\p'|^2}+\sqrt{1+|\q'|^2}$, which implies
\[
\max\{\sqrt{1+|\p'|^2},\sqrt{1+|\q'|^2}\}\geq 2,
\]
and equivalently we have $\max\{|\p'|^2,|\q'|^2\}\geq 3$. Hence, we have
\[
\frac{1+R^2|\p|^2}{1+R^2(|\p'|^2+|\q'|^2)}\leq \frac{1+16R^2}{1+3R^2}\leq \frac{16}{3}.
\]
In the last case, ${\sf (iii)}$ $|\p|\leq 4$ and $|\q|\leq 4$, we use the representation \eqref{p'_alpha} such that
\[
\p'=\p-((\p_\alpha-\q_\alpha)\Omega^\alpha)\Omega,\quad\q'=\q+((\p_\alpha-\q_\alpha)\Omega^\alpha)\Omega.
\]
For simplicity let us write $U^\alpha=\p^\alpha-\q^\alpha$. Then, we have
\begin{align*}
|\p'|^2+|\q'|^2&=\bigg|\frac{\p+\q}{2}+\bigg(\frac{\p-\q}{2}-(U_\alpha\Omega^\alpha)\Omega\bigg)\bigg|^2 +\bigg|\frac{\p+\q}{2}-\bigg(\frac{\p-\q}{2}-(U_\alpha\Omega^\alpha)\Omega\bigg)\bigg|^2\\
&=\frac{|\p+\q|^2}{2}+\frac{|U-2(U_\alpha\Omega^\alpha)\Omega|^2}{2}.
\end{align*}
Since $\Omega_\alpha \Omega^\alpha=1$ and $U_\alpha U^\alpha=h^2$, we have
\begin{align*}
h^2&=U_\alpha U^\alpha=(U_\alpha-2(U_\beta\Omega^\beta)\Omega_\alpha)(U^\alpha-2(U_\beta\Omega^\beta)\Omega^\alpha)\\
&=-(U^0-2(U_\beta\Omega^\beta)\Omega^0)^2+|U-2(U_\beta\Omega^\beta)\Omega|^2\leq |U-2(U_\beta\Omega^\beta)\Omega|^2.
\end{align*}
Hence, we obtain
\[
|\p'|^2+|\q'|^2\geq \frac{|\p+\q|^2}{2}+\frac{h^2}{2}.
\]
We now apply Lemma \ref{lem_basic} to estimate the right side as follows:
\[
\frac{|\p+\q|^2}{2}+\frac{h^2}{2}\geq \frac{|\p+\q|^2}{2}+\frac{|\p-\q|^2}{2p^0q^0}\geq\frac{|\p+\q|^2+|\p-\q|^2}{2p^0q^0}\geq \frac{|\p|^2+|\q|^2}{17},
\]
and this shows that $|\p'|^2+|\q'|^2\geq (|\p|^2+|\q|^2)/17$. Hence, we have
\[
\frac{1+R^2|\p|^2}{1+R^2(|\p'|^2+|\q'|^2)}\leq \frac{1+R^2|\p|^2}{1+R^2|\p|^2/17}\leq 17,
\]
and this completes the proof of the lemma.
\end{proof}

\begin{lemma}\label{lem_partial_1/p0}
For a multi-index $I$, there exists a polynomial ${\mathcal P}$ such that
\[
\partial_I \bigg[\frac{1}{p^0}\bigg]=\frac{1}{R^{|I|}(p^0)^{|I|+1}}{\mathcal P}\bigg(\frac{p_*}{Rp^0}\bigg),
\]
where $p_*=(p_1,p_2,p_3)$.
\end{lemma}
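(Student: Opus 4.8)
The plan is to prove the formula by induction on the order $|I|$ of the multi-index, using the explicit expression $p^0 = \sqrt{1 + R^{-2}|p_*|^2}$ and the relation $\partial^a = R^{-2}\partial_a$ noted in the Notations subsection. The base case $|I| = 0$ is trivial with ${\mathcal P} \equiv 1$. For the inductive step, I would apply one more derivative $\partial^a = R^{-2}\partial_a$ to the claimed expression for $\partial_I[1/p^0]$ and check that the result again has the asserted form with $|I|$ replaced by $|I|+1$ and a new polynomial.

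The key computational input is that $\partial_a p^0 = R^{-2} p_a / p^0$, equivalently $\partial^a p^0 = R^{-4} p_a/p^0$; more usefully, in terms of the argument $z_b := p_b/(Rp^0)$ of the polynomial, one has $\partial_a z_b = R^{-1}(p^0)^{-1}\delta_{ab} - R^{-3}(p^0)^{-3}p_a p_b = R^{-1}(p^0)^{-1}(\delta_{ab} - z_a z_b)$. So each $\partial^a = R^{-2}\partial_a$ acting on a monomial in the $z_b$'s produces a factor $R^{-3}(p^0)^{-1}$ times a polynomial in the $z_b$'s, which keeps the structure intact. Similarly, $\partial^a[(p^0)^{-m}] = R^{-2}\partial_a[(p^0)^{-m}] = -m R^{-4}(p^0)^{-m-2}p_a = -m R^{-3}(p^0)^{-m-1} z_a$, and $\partial^a[R^{-k}] = 0$ since $R$ depends only on $t$. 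Combining these via the product rule on $\partial^a\big[R^{-|I|}(p^0)^{-|I|-1}{\mathcal P}(z)\big]$, every term acquires exactly one extra power of $R^{-1}$ (total $R^{-|I|-1}$), one extra power of $(p^0)^{-1}$ (total $(p^0)^{-|I|-2}$), and the bracketed polynomial factor is updated to a new polynomial ${\mathcal P}'$ in $z = p_*/(Rp^0)$. This is exactly the claimed form at order $|I|+1$.

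I do not expect any genuine obstacle here — the statement is essentially a bookkeeping lemma that isolates the scaling in $R$ and the decay in $p^0$ of derivatives of $1/p^0$. The only mild care needed is to track the powers of $R$ correctly: one must use $\partial^a = R^{-2}\partial_a$ (not $\partial_a$) so that differentiating $(p^0)^{-m}$ yields $R^{-4}$ but differentiating the already-present $R^{-|I|}$ yields nothing, with the net effect being a single additional $R^{-1}$ per derivative because $\partial_a z_b$ itself carries $R^{-1}$. Writing the induction cleanly in terms of the rescaled variable $z = p_*/(Rp^0)$, rather than $p_*$ directly, makes the closure of the polynomial structure manifest and avoids any messy expansion. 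I would present the computation of $\partial_a z_b$, $\partial^a[(p^0)^{-m}]$, and then invoke the product rule to conclude.
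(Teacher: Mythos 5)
Your overall strategy --- induction on $|I|$, with the rescaled variable $z=p_*/(Rp^0)$ making the closure of the polynomial structure manifest --- is sound and is essentially the paper's argument in different clothes: the paper proves $\hat{\partial}_I[1/p^0]=(p^0)^{-|I|-1}P(\hat{p}/p^0)$ in an orthonormal frame and then converts back via $\partial^a=R^{-1}\hat{\partial}_a$ and $p_*=R\hat{p}$, whereas you work directly with the covariant momenta. However, as written your proof contains a genuine inconsistency in the powers of $R$, coming from a double application of the factor $R^{-2}$. In the paper's notation the operator appearing in $\partial_I$ is $\partial^a=\partial/\partial p_a$ itself; the identity $\partial^a=R^{-2}\partial_a$ merely re-expresses it through $\partial_a=\partial/\partial p^a$. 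Your formulas $\partial p^0/\partial p_a=R^{-2}p_a/p^0$ and $\partial z_b/\partial p_a=R^{-1}(p^0)^{-1}(\delta_{ab}-z_az_b)$ are exactly the correct derivatives with respect to $p_a$, i.e.\ they already \emph{are} $\partial^a p^0$ and $\partial^a z_b$; multiplying them by a further $R^{-2}$, as you do when you write ``$\partial^a p^0=R^{-4}p_a/p^0$'' and ``each $\partial^a$ produces a factor $R^{-3}(p^0)^{-1}$,'' is wrong. Taken literally, those intermediate formulas would yield $\partial_I[1/p^0]=R^{-3|I|}(p^0)^{-|I|-1}\mathcal{P}(z)$, which contradicts the lemma already at $|I|=1$ (where $\partial^a[1/p^0]=-R^{-2}p_a(p^0)^{-3}=-R^{-1}(p^0)^{-2}z_a$) and also contradicts your own concluding claim that each derivative contributes exactly one power of $R^{-1}$.

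The fix is purely notational: drop the spurious $R^{-2}$, take the key inputs to be $\partial^a p^0=R^{-2}p_a/p^0$, $\partial^a z_b=R^{-1}(p^0)^{-1}(\delta_{ab}-z_az_b)$ and $\partial^a[(p^0)^{-m}]=-mR^{-1}(p^0)^{-m-1}z_a$, and the product-rule induction then closes exactly as you describe, with one factor of $R^{-1}(p^0)^{-1}$ per derivative. With that correction your proof is complete and arguably more direct than the paper's, since it avoids the detour through the orthonormal frame; the frame change in the paper buys nothing here beyond hiding the $R$-bookkeeping inside the single relation $\partial^a=R^{-1}\hat{\partial}_a$ --- which is precisely the bookkeeping you tripped over.
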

\begin{proof}
We prove this lemma in an orthonormal frame, and then use the relation $\partial^a=R^{-1}\hat{\partial}_a$ to conclude the lemma. We claim that for a multi-index $I$, there exists a polynomial ${\mathcal P}$ such that
\begin{align}
\hat{\partial}_I \bigg[\frac{1}{p^0}\bigg]=\frac{1}{(p^0)^{|I|+1}}{\mathcal P}\bigg(\frac{\p}{p^0}\bigg).\label{partial_1/p0}
\end{align}
The proof is given by an induction. Note that \eqref{partial_1/p0} holds trivially for $|I|=0$, and then suppose that \eqref{partial_1/p0} holds for some $I$ such that $|I|=n\geq 0$. We now prove \eqref{partial_1/p0} for $I'$ such that $|I'|=n+1$. We first notice that
\begin{align}
\hat{\partial}_ap^0=\frac{\hat{p}_a}{p^0},\quad \hat{\partial}_b\bigg[\frac{\hat{p}_a}{p^0}\bigg]=\frac{1}{p^0}\bigg(\eta_{ab}-\frac{\p_a}{p^0}\frac{\p_b}{p^0}\bigg)=\frac{1}{p^0} P_1\bigg(\frac{\p}{p^0}\bigg), \label{partial_p0}
\end{align}
for some polynomial $P_1$. This shows that for any polynomial $P$,
\begin{align}
\hat{\partial}_a\bigg[P\bigg(\frac{\p}{p^0}\bigg)\bigg]=\frac{1}{p^0}P_2\bigg(\frac{\p}{p^0}\bigg),\label{partial_P}
\end{align}
for some polynomial $P_2$. Since $\hat{\partial}_{I'}=\hat{\partial}_a\hat{\partial}_{I}$ for some $a$ and $|I|=n$, we have
\begin{align*}
\hat{\partial}_{I'}\bigg[\frac{1}{p^0}\bigg]&=\frac{\partial}{\partial \p^a}\bigg[\frac{1}{(p^0)^{n+1}}P\bigg(\frac{\p}{p^0}\bigg)\bigg]\\
&=\frac{-(1+n)}{(p^0)^{n+2}}\frac{\p_a}{p^0}P\bigg(\frac{\p}{p^0}\bigg)+\frac{1}{(p^0)^{n+1}}\frac{\partial}{\partial \p^a}\bigg[P\bigg(\frac{\p}{p^0}\bigg)\bigg]\allowdisplaybreaks\\
&=\frac{1}{(p^0)^{n+2}}P_3\bigg(\frac{\p}{p^0}\bigg)=\frac{1}{(p^0)^{|I'|+1}}P_3\bigg(\frac{\p}{p^0}\bigg),
\end{align*}
where $P_3$ denotes another polynomial. This proves \eqref{partial_1/p0}, and the relations $\partial^a=R^{-1}\hat{\partial}_a$ and $p_*=R\p$ derive the desired result.
\end{proof}

\begin{lemma}\label{lem_partial_1/s}
For a multi-index $I$, there exist polynomials ${\mathcal P}_i$ such that
\[
\partial_I\bigg[\frac{1}{\sqrt{s}}\bigg]=\frac{1}{R^{|I|}\sqrt{s}}\sum_{i=0}^{|I|}\bigg(\frac{q^0}{s}\bigg)^i \bigg(\frac{1}{p^0}\bigg)^{|I|-i}{\mathcal P}_i\bigg(\frac{p_*}{Rp^0},\frac{q_*}{Rq^0}\bigg),
\]
where $p_*=(p_1,p_2,p_3)$.
\end{lemma}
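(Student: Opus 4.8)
The plan is to mimic the proof of Lemma \ref{lem_partial_1/p0}: first establish the claim in an orthonormal frame, and then pass to a coordinate frame. Concretely, I would show that for every multi-index $I$ there exist polynomials $P_i$ with
\[
\hat{\partial}_I\Big[\frac{1}{\sqrt{s}}\Big]=\frac{1}{\sqrt{s}}\sum_{i=0}^{|I|}\Big(\frac{q^0}{s}\Big)^i\Big(\frac{1}{p^0}\Big)^{|I|-i}P_i\Big(\frac{\p}{p^0},\frac{\q}{q^0}\Big),
\]
and then the relations $\partial^a=R^{-1}\hat{\partial}_a$, $p_*=R\p$, $q_*=R\q$ (with $q$ a parameter throughout) turn this into the claimed identity and produce the overall factor $R^{-|I|}$.

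The computation rests on three elementary facts. From \eqref{partial_p0} one has $\hat{\partial}_ap^0=\p_a/p^0$ and $\hat{\partial}_b[\p_a/p^0]=(p^0)^{-1}P_1(\p/p^0)$, hence, as in \eqref{partial_P}, differentiating any polynomial of $\p/p^0$ by $\hat{\partial}_a$ yields $(p^0)^{-1}$ times another such polynomial; since the $\q/q^0$–arguments are constants, the same holds for polynomials in $(\p/p^0,\q/q^0)$. The third and crucial fact concerns $s$: in the orthonormal frame $s=(p^0+q^0)^2-|\p+\q|^2$, so
\[
\hat{\partial}_as=2(p^0+q^0)\frac{\p_a}{p^0}-2(\p_a+\q_a)=2q^0\Big(\frac{\p_a}{p^0}-\frac{\q_a}{q^0}\Big),
\]
and therefore $\hat{\partial}_a[1/\sqrt{s}]=-\tfrac12 s^{-3/2}\hat{\partial}_as=\frac{1}{\sqrt{s}}\cdot\frac{q^0}{s}\big(\q_a/q^0-\p_a/p^0\big)$ and, similarly, $\hat{\partial}_a[(q^0/s)^i]=2i\,(q^0/s)^{i+1}\big(\q_a/q^0-\p_a/p^0\big)$. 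The key structural point is that a derivative hitting a power of $s$ raises the exponent of $q^0/s$ by exactly one, multiplies by a polynomial in $(\p/p^0,\q/q^0)$, and produces no bare $q^0$ or $\q$.

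I would then induct on $|I|$, the case $|I|=0$ being trivial. For $|I'|=|I|+1$ write $\hat{\partial}_{I'}=\hat{\partial}_a\hat{\partial}_I$ and apply the Leibniz rule to each summand $s^{-1/2}(q^0/s)^i(p^0)^{-(|I|-i)}P_i(\p/p^0,\q/q^0)$. Differentiating the factor $s^{-1/2}$ or $(q^0/s)^i$ turns the pair of exponents $(i,\,|I|-i)$ into $(i+1,\,|I|-i)$; differentiating $(p^0)^{-(|I|-i)}$ or $P_i$ turns it into $(i,\,|I|-i+1)$. In every case the two exponents sum to $|I|+1$, only polynomials in $(\p/p^0,\q/q^0)$ are generated, and collecting terms while summing over $i=0,\dots,|I|$ gives an expansion with index running from $0$ to $|I|+1$. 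Converting back to a coordinate frame finishes the proof.

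The only genuine difficulty is the bookkeeping: one must verify that no differentiation ever lowers an exponent below $0$ or breaks the homogeneity relation $(\text{power of }q^0/s)+(\text{power of }1/p^0)=|I|$. This works precisely because of the factorisation of $\hat{\partial}_as$ displayed above, in which $q^0$ and $\q$ occur only through the combinations $q^0/s$ and $\q/q^0$, mirroring the way $p^0$ enters only through $\p/p^0$ and $1/p^0$.
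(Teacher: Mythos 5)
Your proposal is correct and follows essentially the same route as the paper: prove the identity in an orthonormal frame by induction on $|I|$, using the key fact that $\hat{\partial}_a s = 2q^0(\p_a/p^0-\q_a/q^0)$ so that each derivative bumps up exactly one of the two exponents while preserving the homogeneity $(\text{power of }q^0/s)+(\text{power of }1/p^0)=|I|$, then pass to the coordinate frame via $\partial^a=R^{-1}\hat{\partial}_a$, $p_*=R\p$, $q_*=R\q$. The paper writes $s=2+2p^0q^0-2\eta_{ab}\p^a\q^b$ rather than $(p^0+q^0)^2-|\p+\q|^2$, but these agree on the mass shell and yield the same $\hat{\partial}_a s$; your write-up is merely more explicit about the exponent bookkeeping.
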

\begin{proof}
We first show that in an orthonormal frame there exist some polynomials ${\mathcal P}_i$ such that
\begin{align}
\hat{\partial}_I\bigg[\frac{1}{\sqrt{s}}\bigg]=\frac{1}{\sqrt{s}}\sum_{i=0}^{|I|}\bigg(\frac{q^0}{s}\bigg)^i \bigg(\frac{1}{p^0}\bigg)^{|I|-i}{\mathcal P}_i\bigg(\frac{\p}{p^0},\frac{\q}{q^0}\bigg),\label{partial_1/s}
\end{align}
which holds trivially for $|I|=0$. Suppose that it holds for some $I$ such that $|I|=n\geq 0$. By a direct calculation, with $s=2+2p^0q^0-2\eta_{ab}\p^a\q^b$, we have
\begin{align}
\hat{\partial}_as=2(\hat{\partial}_ap^0)q^0-2\q_a=q^0P_1\bigg(\frac{\p}{p^0},\frac{\q}{q^0}\bigg),\label{partial_s}
\end{align}
where we used \eqref{partial_p0} and $P_1$ denotes a polynomial. Then, we obtain
\begin{align*}
&\hat{\partial}_a\bigg[\frac{1}{\sqrt{s}}\bigg]=-\frac{\hat{\partial}_as}{2s^{3/2}}=\frac{1}{\sqrt{s}}\bigg(\frac{q^0}{s}\bigg)P_2, \quad \hat{\partial}_a\bigg[\frac{q^0}{s}\bigg]=-\frac{q^0\hat{\partial}_as}{s^2}=\bigg(\frac{q^0}{s}\bigg)^2P_3,
\end{align*}
where $P_2$ and $P_3$ are some polynomials having $\p/p^0$ and $\q/q^0$ as variables. These calculations, together with the previous lemma and the estimate \eqref{partial_P}, show that \eqref{partial_1/s} holds for $I'$ satisfying $|I'|=n+1$. Applying to \eqref{partial_1/s} the relations $\partial^a=R^{-1}\hat{\partial}_a$, $p_*=R\p$, and $q_*=R\q$, we obtain the desired result.
\end{proof}

\begin{lemma}\label{lem_partial_1/n0+s}
For a multi-index $I\neq 0$, there exist polynomials ${\mathcal P}_i$ such that
\[
\partial_I\bigg[\frac{1}{n^0+\sqrt{s}}\bigg]=\frac{1}{R^{|I|}}\sum_{i=1}^{|I|}\frac{(q^0)^{|I|}}{(n^0+\sqrt{s})^{i+1}} {\mathcal P}_i\bigg(\frac{p_*}{Rp^0},\frac{q_*}{Rq^0}, \frac{1}{p^0}, \frac{1}{q^0},\frac{1}{\sqrt{s}}\bigg),
\]
where $p_*=(p_1,p_2,p_3)$.
\end{lemma}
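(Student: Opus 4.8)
The plan is to proceed by induction on $|I|$, closely mirroring the proofs of Lemmas \ref{lem_partial_1/p0} and \ref{lem_partial_1/s}. First I would establish the result in an orthonormal frame, claiming that for $I\neq 0$ there exist polynomials ${\mathcal P}_i$ such that
\begin{align*}
\hat{\partial}_I\bigg[\frac{1}{n^0+\sqrt{s}}\bigg]=\sum_{i=1}^{|I|}\frac{(q^0)^{|I|}}{(n^0+\sqrt{s})^{i+1}}{\mathcal P}_i\bigg(\frac{\p}{p^0},\frac{\q}{q^0},\frac{1}{p^0},\frac{1}{q^0},\frac{1}{\sqrt{s}}\bigg),
\end{align*}
and then convert via $\partial^a=R^{-1}\hat{\partial}_a$, $p_*=R\p$, $q_*=R\q$, which accounts for the overall $R^{-|I|}$ factor. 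For the base case $|I|=1$, I would compute $\hat{\partial}_a(n^0+\sqrt{s})$ directly: from \eqref{partial_p0} we have $\hat{\partial}_a n^0=\hat{\partial}_a p^0=\p_a/p^0$, and from \eqref{partial_s}, $\hat{\partial}_a\sqrt{s}=\hat{\partial}_a s/(2\sqrt{s})=(q^0/\sqrt{s})P(\p/p^0,\q/q^0)$ for a polynomial $P$. Hence $\hat{\partial}_a(n^0+\sqrt{s})=q^0 Q(\p/p^0,\q/q^0,1/q^0,1/\sqrt{s})$ for a polynomial $Q$, so that $\hat{\partial}_a[1/(n^0+\sqrt{s})]=-\hat{\partial}_a(n^0+\sqrt{s})/(n^0+\sqrt{s})^2$ has precisely the claimed form with $|I|=1$.

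For the inductive step, assuming the formula for $|I|=n\geq 1$, I would apply $\hat{\partial}_b$ to a generic term $(q^0)^n(n^0+\sqrt{s})^{-(i+1)}{\mathcal P}_i$. The product rule produces three pieces. Differentiating $(q^0)^n$ gives, using $\hat{\partial}_b q^0=0$ in the $\p$-variables (note $q^0$ does not depend on $\p$), so that term vanishes — this is actually the key simplification that keeps the power of $q^0$ from growing beyond $|I|$; one must be careful that $\hat{\partial}_b$ here means $\partial/\partial\p^b$, under which $q^0$ and $\q$ are constant. Differentiating $(n^0+\sqrt{s})^{-(i+1)}$ brings down $-(i+1)(n^0+\sqrt{s})^{-(i+2)}\hat{\partial}_b(n^0+\sqrt{s})=-(i+1)(n^0+\sqrt{s})^{-(i+2)}q^0 Q(\cdots)$, raising the exponent index from $i$ to $i+1\leq n+1$ and the power of $q^0$ from $n$ to $n+1$, matching the target. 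Differentiating ${\mathcal P}_i$ via the chain rule and \eqref{partial_P}, \eqref{partial_1/p0}, \eqref{partial_1/s}, \eqref{partial_s}, \eqref{partial_p0} introduces factors of $1/p^0$, $q^0/s$, and polynomials in the same list of arguments, but no new powers of $q^0$ or of $(n^0+\sqrt{s})$ beyond what is allowed, since $q^0/s=(q^0/\sqrt{s})(1/\sqrt{s})$ is already expressible through the listed variables (here I would absorb the factor $q^0$ into the required $(q^0)^{n+1}$ and leave $1/\sqrt{s}$ inside ${\mathcal P}$). Collecting and relabeling, every resulting term has the form $(q^0)^{n+1}(n^0+\sqrt{s})^{-(j+1)}$ times a polynomial in the allowed arguments, with $1\leq j\leq n+1$, which is exactly \eqref{partial_1/n0+s} for $|I'|=n+1$.

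The main obstacle is bookkeeping: one must verify that differentiating ${\mathcal P}_i$ never forces the power of $q^0$ above $|I|$, and that no stray factor of $(n^0+\sqrt{s})^{-1}$ with too small an exponent appears. The former follows because, in the orthonormal frame, $\hat{\partial}_b$ (with respect to $\p^b$) annihilates $q^0$ and $\q$, so the only source of new $q^0$ factors is $\hat{\partial}_b(n^0+\sqrt{s})$ and $\hat{\partial}_b s$, each of which contributes exactly one power of $q^0$ per derivative — consistent with one power per unit increase of $|I|$. The latter is immediate since differentiation only ever increases the exponent of $(n^0+\sqrt{s})$. A minor subtlety, worth a sentence in the write-up, is that the exponent index $i$ in the statement runs from $1$ rather than $0$: this holds because at least one factor of $(n^0+\sqrt{s})^{-1}$ is always present after at least one differentiation, and differentiating never cancels it. After establishing \eqref{partial_1/n0+s}, the passage to the coordinate frame is the same routine substitution as in the previous two lemmas.
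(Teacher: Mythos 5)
Your proof is correct and follows essentially the same route as the paper: reduce to the orthonormal frame, verify the base case $|I|=1$ directly from \eqref{partial_p0} and \eqref{partial_s}, establish the structural fact that $\hat\partial_b$ applied to a polynomial in $\p/p^0,\q/q^0,1/p^0,1/q^0,1/\sqrt{s}$ yields $q^0$ times another such polynomial, run the induction, and convert back via $\partial^a=R^{-1}\hat\partial_a$. Your write-up actually spells out the inductive bookkeeping (in particular why the $q^0$-exponent and $(n^0+\sqrt{s})$-exponent track $|I|$) more explicitly than the paper does, but the underlying argument is identical.
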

\begin{proof}
In this lemma we consider a multi-index $I$ such that $|I|\geq 1$. As in the previous lemmas, we prove the lemma in an orthonormal frame:
\begin{align}
\hat{\partial}_I\bigg[\frac{1}{n^0+\sqrt{s}}\bigg]=\sum_{i=1}^{|I|}\frac{(q^0)^{|I|}}{(n^0+\sqrt{s})^{i+1}} {\mathcal P}_i\bigg(\frac{\p}{p^0},\frac{\q}{q^0}, \frac{1}{p^0}, \frac{1}{q^0},\frac{1}{\sqrt{s}}\bigg).\label{partial_1/n0+s}
\end{align}
Below, $P_i$ will denote some polynomials of $\p/p^0$, $\q/q^0$, $1/p^0$, $1/q^0$, and $1/\sqrt{s}$. To get an induction we first consider $\hat{\partial}_I=\hat{\partial}_a$ to obtain
\[
\hat{\partial}_a\bigg[\frac{1}{n^0+\sqrt{s}}\bigg]=\frac{-(\hat{\partial}_ap^0+\hat{\partial}_a \sqrt{s})}{(n^0+\sqrt{s})^2}=\frac{P_1+q^0P_2}{(n^0+\sqrt{s})^2}=\frac{q^0}{(n^0+\sqrt{s})^2}P_3,
\]
where we used \eqref{partial_p0} and \eqref{partial_1/s}, and observe that \eqref{partial_1/n0+s} holds for $|I|=1$. The representations \eqref{partial_1/p0}, \eqref{partial_p0}, and \eqref{partial_1/s} show that for any polynomial $P$,
\[
\hat{\partial}_b\bigg[P\bigg(\frac{\p}{p^0},\frac{\q}{q^0}, \frac{1}{p^0}, \frac{1}{q^0},\frac{1}{\sqrt{s}}\bigg)\bigg] = q^0 P_4\bigg(\frac{\p}{p^0},\frac{\q}{q^0}, \frac{1}{p^0}, \frac{1}{q^0},\frac{1}{\sqrt{s}}\bigg),
\]
for some polynomial $P_4$. It is now easy to see that \eqref{partial_1/n0+s} holds for any $|I|=n$. The lemma is now obtained by applying the relations $\partial^a=R^{-1}\hat{\partial}_a$, $p_*=R\p$, and $q_*=R\q$.
\end{proof}

\begin{lemma}\label{lem_p'}
Consider post-collision momenta $p_*'$ and $q_*'$ in \eqref{p'} for given $p_*$ and $q_*$. Then, high order derivatives of them are estimated as follows: for a multi-index $I\neq 0$,
\[
|\partial_Ip_*'|+|\partial_Iq_*'|\leq CR^{1-|I|}(q^0)^{|I|+4},
\]
where the constant $C$ does not depend on $p_*$.
\end{lemma}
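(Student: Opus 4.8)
The plan is to pass to the orthonormal frame, reduce the statement to a single inequality, and then differentiate the explicit representation \eqref{p'} term by term with the help of Lemmas \ref{lem_basic}--\ref{lem_partial_1/n0+s}. Since $R=R(t)$ does not depend on $p_*$, the relation $\partial^a=R^{-1}\hat\partial_a$ gives $\partial_I=R^{-|I|}\hat\partial_I$, and since $p_a'=R(\hat p')^a$ and $q_a'=R(\hat q')^a$ we have $\partial_Ip_*'=R^{1-|I|}\hat\partial_I\hat p'$ and $\partial_Iq_*'=R^{1-|I|}\hat\partial_I\hat q'$. Momentum conservation in the orthonormal frame reads $\hat q'=\hat p+\hat q-\hat p'$, so $\hat\partial_I\hat q'=-\hat\partial_I\hat p'$ whenever $|I|\geq2$, while for $|I|=1$ the two differ only by a unit coordinate vector, which is harmless since then $R^{1-|I|}=1$ and $q^0\geq1$. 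Hence it suffices to prove $|\hat\partial_I\hat p'|\leq C(q^0)^{|I|+4}$ for every multi-index $I$ with $|I|\geq1$, with $C$ independent of $\hat p$.

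Rewriting \eqref{p'} in the orthonormal frame, using $p_a=R\hat p^a$, $n_a=R\hat n^a$ with $\hat n^a=\hat p^a+\hat q^a$, and $q^b=R^{-1}\hat q^b$ (so that, e.g., $n_bq^b=\hat n\cdot\hat q$ and all explicit powers of $R$ cancel), one obtains $(\hat p')^k=\hat p^k+2\tilde A\,\tilde B_k$ with $\tilde A=-q^0(\hat n\cdot\omega)/\sqrt s+\hat q\cdot\omega+(\hat n\cdot\omega)(\hat n\cdot\hat q)/[\sqrt s\,(n^0+\sqrt s)]$ and $\tilde B_k=\omega^k+(\hat n\cdot\omega)\hat n^k/[\sqrt s\,(n^0+\sqrt s)]$. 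Expanding $\hat\partial_I[\tilde A\tilde B_k]$ by the Leibniz rule, one estimates each resulting term using the following facts: $\hat\partial_a(\hat n\cdot\omega)=\omega^a$, $\hat\partial_a\hat n^k=\delta^k_a$ and $\hat\partial_a(\hat n\cdot\hat q)=\hat q^a$, so each such factor is annihilated after one derivative and is otherwise bounded by $1$, by $|\hat q|\leq q^0$, or by $|\hat n|$; $q^0$ is constant in $p_*$ and $\hat\partial_In^0$ is bounded for $|I|\geq1$ by \eqref{partial_p0}; each derivative falling on $1/\sqrt s$ produces, by \eqref{partial_1/s}, a factor $\leq Cq^0/\sqrt s$ times bounded quantities (here $s\geq4$ gives $q^0/s\leq q^0/4$ and $1/\sqrt s\leq1/2$, while $|\hat p/p^0|\leq1$ and $|\hat q/q^0|\leq1$ make the polynomials ${\mathcal P}_i$ of \eqref{partial_1/s} bounded); and likewise each derivative falling on $1/(n^0+\sqrt s)$ produces, by \eqref{partial_1/n0+s}, a factor $\leq Cq^0/(n^0+\sqrt s)$ times bounded quantities.

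The crux --- and the reason \eqref{p'} was introduced --- is that $n^0$ and $|\hat n|$ are comparable to $p^0$, hence unbounded in $\hat p$, yet in \eqref{p'} they occur only divided by enough powers of $\sqrt s$ or $n^0+\sqrt s$ to be controlled. Two algebraic facts do this: the identity $s=(n^0)^2-|\hat n|^2$, i.e. $|\hat n|^2=(n^0-\sqrt s)(n^0+\sqrt s)$; and the inequality $n^0\leq s\,q^0$, which follows since $s=2+2(p^0q^0-\hat p\cdot\hat q)\geq2+(1+|\hat p|^2)/(p^0q^0)=2+p^0/q^0$, using $p^0q^0-\hat p\cdot\hat q\geq p^0q^0-|\hat p||\hat q|=(1+|\hat p|^2+|\hat q|^2)/(p^0q^0+|\hat p||\hat q|)$, whence $sq^0\geq p^0+q^0=n^0$. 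With these, every surviving power of $n^0$ or $|\hat n|$ in a Leibniz term --- including those hidden inside an undifferentiated $\tilde A$ or $\tilde B_k$ --- can be traded for $sq^0$ (using also $|\hat n|\leq n^0$) and the accompanying powers of $s$ cancelled, leaving bounded quantities times powers of $q^0$. Counting these powers --- once its $n^0$- and $|\hat n|$-factors are resolved, $\hat p'$ carries $q^0$ to at most the fourth power, and by the previous paragraph each of the $|I|$ derivatives contributes at most one more --- gives $|\hat\partial_I\hat p'|\leq C(q^0)^{|I|+4}$; multiplying by $R^{1-|I|}$ yields the lemma.

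I expect the genuine difficulty to be precisely this last step: organizing the full Leibniz expansion so as to verify that, across all of its terms, no uncompensated factor of $p^0$ --- equivalently, no power of $s$ left unabsorbed by $n^0\leq sq^0$ --- survives. This is the quantitative form of the assertion made in the introduction that the derivatives of the post-collision momenta in the new representation \eqref{p'} are uniformly bounded in $p$, and it is exactly the property that the classical representations lack.
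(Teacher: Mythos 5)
Your proposal is correct and follows the same skeleton as the paper's proof: reduce to the orthonormal frame via $\partial_Ip_*'=R^{1-|I|}\hat\partial_I\hat p'$, write $\hat p'^\alpha=\hat p^\alpha+2(\hat q_\beta\Omega^\beta)\Omega^\alpha$, and run a Leibniz/Faa di Bruno expansion powered by Lemmas \ref{lem_partial_1/p0}--\ref{lem_partial_1/n0+s}. Where you genuinely depart from the paper is in the mechanism that kills the unbounded factors $n^0\sim p^0$. The paper splits into the cases $|\hat p|\lessgtr\max\{2|\hat q|,1\}$ and, in the large-$|\hat p|$ regime, invokes $1/\sqrt s\leq 4\sqrt{q^0/p^0}$ (its \eqref{est_1/s}); you instead prove the single global inequality $n^0\leq s\,q^0$, which at the decisive point of the paper's argument (bounding $|\hat n|/s+|\hat n|/p^0$ by $Cq^0$ after \eqref{est_Omega1-3}) does the job in one line, $n^0/s\leq q^0$ and $n^0/p^0\leq 2q^0$, with no case analysis. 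Your derivation of $n^0\leq s\,q^0$ is correct, and the power count does close: from $n^0/\sqrt s\leq\sqrt s\,q^0$ one gets $|\Omega|\leq C\sqrt s\,q^0$ undifferentiated and $|\hat\partial_I\Omega|\leq C(q^0)^{|I|+1}/\sqrt s$ for $|I|\geq1$, so every Leibniz term of $\hat\partial_I[(\hat q_\beta\Omega^\beta)\Omega^\alpha]$ with $|I|\geq1$ is bounded by $C(q^0)^{|I|+3}$, comfortably inside $(q^0)^{|I|+4}$. One caution on the bookkeeping you rightly flag as the crux: the crude statement that ``each derivative on $1/\sqrt s$ produces a factor $\leq Cq^0/\sqrt s$'' is \emph{not} sufficient, because multiplying that by an undifferentiated $(\hat n\cdot\omega)\leq s\,q^0$ leaves a loose $\sqrt s\sim(p^0q^0)^{1/2}$; you must retain the sharper form of Lemma \ref{lem_partial_1/s}, namely $|\hat\partial_I(1/\sqrt s)|\leq C(q^0)^{|I|}s^{-1/2}(s^{-1}+(p^0)^{-1})$ for $|I|\geq1$ (exactly the paper's \eqref{est_Omega1-3}), and only then trade $n^0$ for $s\,q^0$ (resp.\ $2p^0q^0$). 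With that refinement your argument is complete; the identity $|\hat n|^2=(n^0-\sqrt s)(n^0+\sqrt s)$ turns out not to be needed, since $|\hat n|\leq n^0$ suffices everywhere.
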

\begin{proof}
By the relations $\partial_I=R^{-|I|}\hat{\partial}_I$ and $p'_*=R\p'$, we have $\partial_Ip'_*=R^{1-|I|}\hat{\partial}_I\p'$. Hence, we will obtain an estimate for $\hat{\partial}_I\p'$, and will conclude the lemma by applying these relations. The representation \eqref{p'} can be written in an orthonormal frame as in \eqref{p'_alpha} with $\Omega^\alpha$:
\[
\p'^\alpha=\p^\alpha+2(\q_\beta\Omega^\beta)\Omega^\alpha,\quad \Omega^\alpha=\bigg(\frac{(\hat{n}\cdot\omega)}{\sqrt{s}},\omega+\frac{(\n\cdot\omega)\n}{\sqrt{s}(n^0+\sqrt{s})}\bigg),
\]
which shows that it is enough to consider the derivatives of $(\n\cdot\omega)/\sqrt{s}$ and $\n/(n^0+\sqrt{s})$. Below, the following estimate of $1/\sqrt{s}$ will be frequently used: if $|\p|\geq\max\{2|\q|,1\}$, then
\begin{align}\label{est_1/s}
\frac{1}{\sqrt{s}}\leq\frac{1}{h}\leq\frac{\sqrt{p^0q^0}}{|\p-\q|}\leq \frac{2\sqrt{p^0q^0}}{|\p|}\leq 4\sqrt{\frac{q^0}{p^0}},
\end{align}
and if $|\p|\leq\max\{2|\q|,1\}$, then $1/\sqrt{s}\leq C$ by Lemma \ref{lem_basic}.

We first estimate the quantity $\hat{\partial}_I((\n\cdot\omega)/\sqrt{s})$. For $|I|=0$, we have
\begin{align}\label{est_Omega1}
\bigg|\hat{\partial}_I\bigg[\frac{(\hat{n}\cdot\omega)}{\sqrt{s}}\bigg]\bigg|\leq \frac{|\n|}{\sqrt{s}}\leq
\left\{
\begin{aligned}
Cq^0,\quad\mbox{if}\quad |\p|\leq\max\{2|\q|,1\},\\
C(p^0)^{\frac12}(q^0)^{\frac12},\quad\mbox{if}\quad |\p|\geq\max\{2|\q|,1\},
\end{aligned}
\right.
\end{align}
where we used $|\n|\leq |\p|+|\q|\leq Cq^0$ in the first estimate, and $|\n|\leq |\p|+|\q|\leq Cp^0$ in the second estimate with \eqref{est_1/s}. High order derivatives of $(\n\cdot\omega)/\sqrt{s}$ are written as follows:
\begin{align}\label{est_Omega1-1}
\hat{\partial}_I\bigg[\frac{(\n\cdot\omega)}{\sqrt{s}}\bigg]=\sum \hat{\partial}_{I_1}\Big[\n\cdot\omega\Big]\hat{\partial}_{I_2}\bigg[\frac{1}{\sqrt{s}}\bigg],
\end{align}
which is a finite sum for some multi-indices $I_1$ and $I_2$. Note that the quantity $\hat{\partial}_{I_1}(\n\cdot\omega)$ is bounded by a constant $C$ or by $|\n|$ for $|I_1|\geq 1$ or $|I_1|=0$, respectively, and that the representation \eqref{partial_1/s} shows that
\begin{align}\label{est_Omega1-2}
\bigg|\hat{\partial}_{I_2}\bigg[\frac{1}{\sqrt{s}}\bigg]\bigg|\leq C\frac{(q^0)^{|I_2|}}{\sqrt{s}}\sum_{i=0}^{|I_2|}\bigg(\frac{1}{s}\bigg)^i\bigg(\frac{1}{p^0}\bigg)^{|I_2|-i} \leq C\frac{(q^0)^{|I_2|}}{\sqrt{s}},
\end{align}
for any $|I_2|\geq 0$. If $|I_2|\geq 1$, then the above estimate is improved by
\begin{align}\label{est_Omega1-3}
\bigg|\hat{\partial}_{I_2}\bigg[\frac{1}{\sqrt{s}}\bigg]\bigg|\leq C\frac{(q^0)^{|I_2|}}{\sqrt{s}}\sum_{i=0}^{|I_2|}\bigg(\frac{1}{s}\bigg)^i\bigg(\frac{1}{p^0}\bigg)^{|I_2|-i} \leq C\frac{(q^0)^{|I_2|}}{\sqrt{s}}\bigg(\frac{1}{s}+\frac{1}{p^0}\bigg).
\end{align}
We now consider \eqref{est_Omega1-1} for $|I|\geq 1$. If $|I_1|\geq 1$, then the quantity $\hat{\partial}_{I_1}(\n\cdot\omega)$ is bounded by a constant, and we apply \eqref{est_Omega1-2} to $\hat{\partial}_{I_2}(1/\sqrt{s})$. If $|I_1|=0$, hence $|I_2|\geq 1$, then the quantity $\hat{\partial}_{I_1}(\n\cdot\omega)$ is bounded by $|\n|$, and we apply \eqref{est_Omega1-3} to $\hat{\partial}_{I_2}(1/\sqrt{s})$. We obtain an estimate:
\[
\bigg|\hat{\partial}_I\bigg[\frac{(\hat{n}\cdot\omega)}{\sqrt{s}}\bigg]\bigg|\leq C\frac{(q^0)^{|I|}}{\sqrt{s}} \bigg(1+\frac{|\n|}{s}+\frac{|\n|}{p^0}\bigg),
\]
which is further estimated as follows: If $|\p|\leq\max\{2|\q|,1\}$, then $|\n|\leq Cq^0$, and we obtain $1+|\n|/s+|\n|/p^0\leq Cq^0$. If $|\p|\geq\max\{2|\q|,1\}$, then $|\n|\leq Cp^0$, and we apply \eqref{est_1/s} to obtain the same estimate, i.e., $1+|\n|/s+|\n|/p^0\leq Cq^0$. Therefore, we conclude that for $|I|\geq 1$,
\begin{align}\label{est_Omega2}
\bigg|\hat{\partial}_I\bigg[\frac{(\hat{n}\cdot\omega)}{\sqrt{s}}\bigg]\bigg|\leq C\frac{(q^0)^{|I|+1}}{\sqrt{s}}\leq \left\{
\begin{aligned}
C(q^0)^{|I|+1},\quad\mbox{if}\quad |\p|\leq\max\{2|\q|,1\},\\
C(p^0)^{-\frac12}(q^0)^{|I|+\frac32},\quad\mbox{if}\quad |\p|\geq\max\{2|\q|,1\},
\end{aligned}
\right.
\end{align}
where we used \eqref{est_1/s}.

The quantity $\hat{\partial}_I(\n/(n^0+\sqrt{s}))$ is estimated in a similar way. It is easy to see that for $|I|=0$,
\begin{align}\label{est_Omega3}
\bigg|\hat{\partial}_I\bigg[\frac{\n}{n^0+\sqrt{s}}\bigg]\bigg| \leq \frac{n^0}{n^0+\sqrt{s}}\leq C.
\end{align}
For $|I|\geq 1$, we write $\hat{\partial}_I(\n/(n^0+\sqrt{s}))$ as a finite sum of $(\hat{\partial}_{I_1}\n) \hat{\partial}_{I_2}(1/(n^0+\sqrt{s}))$ for some multi-indices $I_1$ and $I_2$ satisfying $I_1+I_2=I$.
If $|I_2|=0$, then $|I_1|\geq 1$, hence $|\hat{\partial}_{I_1}\n|\leq C$, and in this case we have
\[
\bigg|(\hat{\partial}_{I_1}\n) \hat{\partial}_{I_2}\bigg[\frac{1}{n^0+\sqrt{s}}\bigg]\bigg|\leq \frac{C}{n^0+\sqrt{s}}.
\]
If $|I_2|\geq 1$, then we apply the representation \eqref{partial_1/n0+s} to obtain
\begin{align*}
\bigg|(\hat{\partial}_{I_1}\n)\hat{\partial}_{I_2}\bigg[\frac{1}{n^0+\sqrt{s}}\bigg]\bigg| \leq \frac{Cn^0(q^0)^{|I_2|}}{(n^0+\sqrt{s})^2}\leq \frac{C(q^0)^{|I|}}{n^0+\sqrt{s}},
\end{align*}
where we used the fact that $|\hat{\partial}_{I_1}\n|\leq Cn^0$ for any $|I_1|\geq 0$. We combine these estimates to conclude that for $|I|\geq 1$,
\begin{align}\label{est_Omega4}
\bigg|\hat{\partial}_I\bigg[\frac{\n}{n^0+\sqrt{s}}\bigg]\bigg| \leq \frac{C(q^0)^{|I|}}{n^0+\sqrt{s}}\leq C(p^0)^{-1}(q^0)^{|I|},
\end{align}
where we simply used $n^0\geq p^0$.

We now collect the estimates \eqref{est_Omega1}, \eqref{est_Omega2}, \eqref{est_Omega3}, and \eqref{est_Omega4} to obtain
\begin{align}\label{est_Omega}
|\hat{\partial}_I\Omega^\alpha|&\leq \left\{
\begin{aligned}
C(p^0)^{\frac12}q^0,&\quad\mbox{if}\quad |I|=0,\\
C(q^0)^{|I|+1},&\quad\mbox{if}\quad |I|\geq 1,\quad |\p|\leq\max\{2|\q|,1\},\\
C(p^0)^{-\frac12}(q^0)^{|I|+\frac32},&\quad\mbox{if}\quad |I|\geq 1,\quad|\p|\geq\max\{2|\q|,1\}.
\end{aligned}
\right.
\end{align}
Hence, we observe from the representation of $\p'$ that high order derivatives of $\p'$ are estimated as follows:
\begin{align*}
|\hat{\partial}_I\p'|&\leq \left\{
\begin{aligned}
Cp^0(q^0)^3,&\quad\mbox{if}\quad |I|=0,\\
C(p^0)^{\frac12}(q^0)^{|I|+3},&\quad\mbox{if}\quad |I|\geq 1,\quad |\p|\leq\max\{2|\q|,1\},\\
C(q^0)^{|I|+4},&\quad\mbox{if}\quad |I|\geq 1,\quad|\p|\geq\max\{2|\q|,1\}.
\end{aligned}
\right.
\end{align*}
Since $p^0\leq Cq^0$ in the second case, we conclude that for $|I|\geq 1$,
\begin{align}\label{est_p'}
|\hat{\partial}_I\p'|\leq C(q^0)^{|I|+4},
\end{align}
and consequently the relation $\partial_Ip'_*=R^{1-|I|}\hat{\partial}_I\p'$ derives the desired result. The estimate of $\partial_Iq_*'$ is the same, and this completes the proof.
\end{proof}

\section{Energy estimate}\label{sec_energy}
In this part we study the energy estimate for the Boltzmann equation. We write the Boltzmann equation as follows:
\begin{align*}
\partial_tf&=Q(f,f)=Q_+(f,f)-Q_-(f,f),
\end{align*}
where the gain term $Q_+$ and the loss term $Q_-$ are written by
\begin{align*}
Q_+(f,f)&=R^{-3}\iint \frac{1}{p^0q^0\sqrt{s}}f(p_*')f(q_*')d\omega dq_*,\\
Q_-(f,f)&=R^{-3}\iint \frac{1}{p^0q^0\sqrt{s}}f(p_*)f(q_*)d\omega dq_*.
\end{align*}
Define the norm as follows: for $k\geq 0$ and $N\geq 0$,
\begin{align*}
\|g(t)\|^2_{k,N}=\sum_{|\beta|\leq N}\|\partial_\beta g(t)\|_k^2,\quad \|g(t)\|_k^2=\int_{\bbr^3}\langle p_*\rangle^{2k}e^{p^0(t)}|g(t,p_*)|^2dp_*,
\end{align*}
where $N$ and $k$ will be determined later.

\begin{lemma}\label{lem_f1}
Let $f$ be a solution of the Boltzmann equation. Then, $f$ satisfies the following estimate:
\[
\|f(t)\|_k^2\leq \|f(0)\|_k^2+C\sup_{s\in[0,t]}\|f(s)\|_k^3,
\]
where $k$ is a non-negative integer.
\end{lemma}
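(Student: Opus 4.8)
The plan is to run a Guo-type energy estimate for the time-dependent norm $\|\cdot\|_k$, using two structural facts: on an expanding background the weight $e^{p^0(t)}$ is non-increasing in $t$, and the factor $R^{-3}$ in front of the collision operator decays exponentially by Lemma~\ref{lem_flrw}. First I would differentiate, for a solution $f$,
\[
\frac{d}{dt}\|f(t)\|_k^2=\int_{\bbr^3}\langle p_*\rangle^{2k}\big(\partial_t e^{p^0}\big)|f|^2\,dp_*+2\int_{\bbr^3}\langle p_*\rangle^{2k}e^{p^0}f\,Q(f,f)\,dp_*.
\]
Because $p^0=\sqrt{1+R^{-2}|p_*|^2}$ and $\dot R>0$ for an initially expanding spacetime (Lemma~\ref{lem_flrw}), one has $\partial_t p^0=-\tfrac{\dot R}{R}\tfrac{R^{-2}|p_*|^2}{p^0}\le 0$, so the first term is $\le 0$ and can be discarded; it remains to bound the second, which I split via $Q=Q_+-Q_-$ into a gain and a loss contribution, taking absolute values throughout (no sign of $f$ is used).

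For the gain term put $F(p_*)=\langle p_*\rangle^k e^{p^0/2}|f(p_*)|$, so that $\|F\|_{L^2}^2=\|f\|_k^2$, and write $|f|=\langle p_*\rangle^{-k}e^{-p^0/2}F$ at $p_*$, $p_*'$ and $q_*'$. The weight that appears is $\langle p_*\rangle^{k}\langle p_*'\rangle^{-k}\langle q_*'\rangle^{-k}\,e^{(p^0-p'^0-q'^0)/2}$; by Lemma~\ref{lem_pp'q'} its algebraic part is bounded, and since $q'_\alpha=p_\alpha+q_\alpha-p'_\alpha$ yields $p^0+q^0=p'^0+q'^0$, its exponential part equals $e^{-q^0/2}$. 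Hence the gain contribution is at most
\[
CR^{-3}\int_{\bbr^3}\!\!\int_{\bbr^3}\!\!\int_{\bbs^2}\frac{e^{-q^0/2}}{p^0q^0\sqrt s}\,F(p_*)F(p_*')F(q_*')\,d\omega\,dq_*\,dp_*.
\]
I would then apply Cauchy--Schwarz in $(p_*,q_*,\omega)$ separating $F(p_*)$ from $F(p_*')F(q_*')$: using $s\ge 4$, $p^0\ge 1$ and $\int_{\bbr^3}e^{-q^0/2}\,dq_*\le CR^3$ (the computation in the proof of Lemma~\ref{lem_R}) the first factor is $\le CR^{3/2}\|f\|_k$, while in the second factor one bounds $e^{-q^0/2}/(p^0q^0\sqrt s)\le\tfrac12$ and performs the relativistic pre/post-collisional change of variables $(p_*,q_*)\mapsto(p_*',q_*')$ at fixed $\omega$ --- whose Jacobian is bounded above and below, cf.~\cite{GlasseyStrauss93,GuoStrain12} --- to get $\le C\|f\|_k^2$. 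This gives a gain contribution $\le CR^{-3/2}\|f\|_k^3$.

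The loss term is easier: there $\langle p_*\rangle^{2k}e^{p^0}|f(p_*)|^2=F(p_*)^2$ and $|f(q_*)|\le e^{-q^0/2}F(q_*)$, so it is $\le CR^{-3}\iint\!\!\int\frac{e^{-q^0/2}}{p^0q^0\sqrt s}F(p_*)^2F(q_*)\,d\omega\,dq_*\,dp_*$, and one Cauchy--Schwarz in $q_*$ with $\int_{\bbr^3}(q^0)^{-2}e^{-q^0}\,dq_*\le CR^3$ (Lemma~\ref{lem_R}) again yields $\le CR^{-3/2}\|f\|_k^3$. Combining, $\frac{d}{dt}\|f(t)\|_k^2\le CR(t)^{-3/2}\|f(t)\|_k^3$; integrating on $[0,t]$ and using $R(t)\ge e^{\sqrt{\Lambda/3}\,t}$ (Lemma~\ref{lem_flrw}), so that $\int_0^\infty R(s)^{-3/2}\,ds<\infty$, turns $\int_0^t R(s)^{-3/2}\|f(s)\|_k^3\,ds$ into $C\sup_{s\in[0,t]}\|f(s)\|_k^3$, which is the claim. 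I expect the main obstacle to be the gain term: one must simultaneously absorb the post-collisional polynomial weights (Lemma~\ref{lem_pp'q'}) and the exponential weights (via energy conservation), and then justify the change of variables in $(p_*,q_*)$ for the $F(p_*')F(q_*')$ factor; everything else is bookkeeping of the competing $R^{-3}$ and $R^{3/2}$ factors, whose net exponential decay is exactly what lets the supremum --- rather than a time integral --- of $\|f\|_k^3$ appear on the right.
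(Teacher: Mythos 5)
Your proposal is correct and follows essentially the same route as the paper: non-increase of the weight $e^{p^0(t)}$, energy conservation to convert the exponential weight into $e^{-q^0/2}$, Lemma~\ref{lem_pp'q'} for the polynomial weights, Cauchy--Schwarz plus the collisional change of variables, Lemma~\ref{lem_R} for the $q_*$-integral, and integrability of $R^{-3/2}$ to produce the supremum. The only differences are cosmetic (you write a differential inequality where the paper integrates the pointwise identity $\partial_t(f^2)=2fQ$ in time before multiplying by the weight; you estimate the loss term where the paper simply drops it using $f\ge0$), and one small imprecision: the Jacobian of $(p_*,q_*)\mapsto(p_*',q_*')$ at fixed $\omega$ is $p'^0q'^0/(p^0q^0)$, which is not uniformly bounded --- what the argument actually uses is the exact invariance $(p^0q^0)^{-1}dp_*dq_*=(p'^0q'^0)^{-1}dp_*'dq_*'$ combined with $p'^0,q'^0\ge1$, which is why the factor $1/(p^0q^0)$ must sit entirely in the $F(p_*')F(q_*')$ half of the Cauchy--Schwarz split.
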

\begin{proof}
Multiplying $f$ to the Boltzmann equation, $\partial_t(f^2)/2=fQ(f,f)$, and integrating the equation on $[0,t]$ at a fixed $p_*$, we have
\[
f^2(t,p_*)=f^2(0,p_*)+2\int_0^tf(s,p_*)Q(f,f)(s,p_*)ds.
\]
We consider $p^0=p^0(t)$ as a function of $t$, and multiply $e^{p^0}$ to the above to obtain
\begin{align*}
e^{p^0(t)}f^2(t,p_*)&=e^{p^0(t)}f^2(0,p_*)+2\int_0^te^{p^0(t)}f(s,p_*)Q(f,f)(s,p_*)ds\\
&=e^{p^0(t)}f^2(0,p_*)+2\int_0^tR^{-3}(s)f(s,p_*)\iint \frac{1}{p^0q^0\sqrt{s}}e^{p^0(t)}f(s,p_*')f(s,q_*')d\omega dq_* ds\\
&\quad -2\int_0^t R^{-3}(s)f(s,p_*)\iint \frac{1}{p^0q^0\sqrt{s}}e^{p^0(t)}f(s,p_*)f(s,q_*)d\omega dq_* ds.
\end{align*}
Note that since $R$ is increasing, $p^0$ is decreasing in $t$ for each $p_*$, hence we have
\begin{align*}
&e^{p^0(t)}f^2(t,p_*)\leq e^{p^0(0)}f^2(0,p_*)\\
&\quad+2\int_0^tR^{-3}(s)f(s,p_*)\iint \frac{1}{p^0q^0\sqrt{s}}e^{p^0(s)}f(s,p_*')f(s,q_*')d\omega dq_* ds,
\end{align*}
where we ignored the loss term. By the energy conservation at time $s$, i.e.,
\[
p'^0(s)+q'^0(s)=p^0(s)+q^0(s),
\]
we have
\begin{align*}
&e^{p^0(t)}f^2(t,p_*)\leq e^{p^0(0)}f^2(0,p_*)\\
&\quad+2\int_0^tR^{-3}(s)e^{\frac12p^0(s)}f(s,p_*)\iint \frac{1}{p^0q^0\sqrt{s}}e^{-\frac12q^0(s)}e^{\frac12p'^0(s)}f(s,p_*')e^{\frac12q'^0(s)}f(s,q_*')d\omega dq_* ds.
\end{align*}
We now multiply the weight function $\langle p_*\rangle$ to the above. By Lemma \ref{lem_pp'q'} with $k\geq 0$ we have
\begin{align*}
&\langle p_*\rangle^{2k} e^{p^0(t)}f^2(t,p_*)\leq\langle p_*\rangle^{2k} e^{p^0(0)}f^2(0,p_*)\\
&\quad +C\int_0^tR^{-3}(s)\langle p_*\rangle^ke^{\frac12p^0(s)}f(s,p_*)\\
&\qquad\times\iint \frac{1}{p^0q^0\sqrt{s}}e^{-\frac12q^0(s)}\langle p_*'\rangle^ke^{\frac12p'^0(s)}f(s,p_*')\langle q_*'\rangle^ke^{\frac12q'^0(s)}f(s,q_*')d\omega dq_* ds.
\end{align*}
Integrate the above inequality with respect to $p_*$ to obtain $\|f(t)\|_k^2$ and $\|f(0)\|_k^2$ from the left side and the first quantity on the right side, respectively. To estimate the last quantity we write $F(s,p_*)=\langle p_*\rangle^k e^{\frac12 p^0(s)}f(s,p_*)$ for simplicity, and consider for each $s$,
\begin{align*}
&\int F(p_*)\iint\frac{1}{p^0q^0\sqrt{s}}e^{-\frac12 q^0(s)}F(p_*')F(q_*')d\omega dq_*dp_*\\
&\leq C\bigg(\iiint F^2(p_*)e^{-q^0(s)}d\omega dq_* dp_*\bigg)^{\frac12} \bigg(\iiint \frac{1}{p^0q^0}F^2(p_*')F^2(q_*')d\omega dq_*dp_*\bigg)^{\frac12}\allowdisplaybreaks\\
&\leq C\|F(s)\|_{L^2}\bigg(\int e^{-q^0(s)} dq_*\bigg)^{\frac12} \bigg(\iiint \frac{1}{p^0q^0}F^2(p_*)F^2(q_*)d\omega dq_*dp_*\bigg)^{\frac12}\\
&\leq CR^{\frac32}(s)\|F(s)\|_{L^2}^3,
\end{align*}
where we used $(p^0q^0)^{-1}dp_*dq_*=(p'^0q'^0)^{-1}dp_*'dq_*'$ and Lemma \ref{lem_R}. To summarize, we have
\[
\|f(t)\|_k^2\leq \|f(0)\|_k^2+C\int_0^tR^{-\frac32}(s)\|F(s)\|_{L^2}^3ds,
\]
and, since $R^{-\frac32}$ is integrable and $\|F(s)\|_{L^2}=\|f(s)\|_k$, we obtain the desired result.
\end{proof}

\begin{lemma}\label{lem_f2}
Let $f$ be a solution of the Boltzmann equation. Then, $f$ satisfies the following estimate: for $\beta\neq 0$,
\[
\|\partial_{\beta}f(t)\|^2_k\leq \|\partial_\beta f(0)\|^2_{k}+C\sup_{s\in[0,t]}\|f(s)\|_{k,|\beta|}^3,
\]
where $k$ is a non-negative integer.
\end{lemma}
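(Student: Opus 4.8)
The plan is to repeat the argument of Lemma~\ref{lem_f1} with $\partial_\beta f$ in the role of $f$. Since $R=R(t)$ does not depend on $p_*$, one has $\partial_t(\partial_\beta f)=\partial_\beta Q(f,f)$; multiplying by $\partial_\beta f$, integrating over $[0,t]$ at a fixed $p_*$, multiplying by $\langle p_*\rangle^{2k}e^{p^0(t)}$ and using that $p^0$ is decreasing in $t$ (so that $e^{p^0(t)}\le e^{p^0(0)}$ on the data term and $e^{p^0(t)}\le e^{p^0(s)}$ inside the time integral), one obtains after integration in $p_*$ the term $\|\partial_\beta f(0)\|_k^2$ and is left with the contributions of $\partial_\beta Q_+(f,f)$ and $\partial_\beta Q_-(f,f)$.

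For the gain term I would expand $\partial_\beta$ by the Leibniz rule, and, at the places where $f$ is evaluated at $p_*'$ or $q_*'$, by the Fa\`{a} di Bruno formula: $\partial_\beta[(p^0q^0\sqrt s)^{-1}f(p_*')f(q_*')]$ becomes a finite sum of products of $\partial_{\gamma_0}[(p^0)^{-1}]$, the untouched factor $(q^0)^{-1}$, $\partial_{\gamma_1}[s^{-1/2}]$, finitely many factors $\partial_\gamma p_*'$ and $\partial_\gamma q_*'$ with $|\gamma|\ge1$, and factors $(\partial_I f)(p_*')$, $(\partial_J f)(q_*')$ with $|I|,|J|\le|\beta|$. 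Lemma~\ref{lem_partial_1/p0} gives $|\partial_{\gamma_0}[(p^0)^{-1}]|\le C(p^0)^{-1}$, since the polynomial there is evaluated at $p_*/(Rp^0)$, of modulus $\le1$ by Lemma~\ref{lem_basic}, and $R\ge1$; Lemma~\ref{lem_partial_1/s} gives $|\partial_{\gamma_1}[s^{-1/2}]|\le C(q^0)^{|\gamma_1|}$ (bounded polynomials, $q^0/s\le q^0/4$, $s\ge4$, $R\ge1$); and Lemma~\ref{lem_p'} gives $|\partial_\gamma p_*'|+|\partial_\gamma q_*'|\le C(q^0)^{|\gamma|+4}$ since $R^{1-|\gamma|}\le1$ for $|\gamma|\ge1$. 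Every power of $R$ arising in the expansion is thus nonpositive and every additional polynomial factor is a power of $q^0$, so that for some integer $M=M(|\beta|)$ the whole integrand of $\partial_\beta Q_+$ is bounded by $CR^{-3}(p^0q^0)^{-1}(q^0)^{M}\sum_{|I|,|J|\le|\beta|}|(\partial_I f)(p_*')|\,|(\partial_J f)(q_*')|$, where $R^{-3}$ is the prefactor of the equation.

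Then I would run the weighted Cauchy--Schwarz of Lemma~\ref{lem_f1} verbatim. Writing $F_I(s,p_*)=\langle p_*\rangle^k e^{\frac12 p^0(s)}(\partial_I f)(s,p_*)$, so that $\|F_I(s)\|_{L^2}=\|\partial_I f(s)\|_k\le\|f(s)\|_{k,|\beta|}$, I split $\langle p_*\rangle^{2k}=\langle p_*\rangle^k\langle p_*\rangle^k$ and bound one factor by $C\langle p_*'\rangle^k\langle q_*'\rangle^k$ via Lemma~\ref{lem_pp'q'}, and split $e^{p^0(s)}$ by the energy conservation at time $s$ as $e^{\frac12 p^0(s)}e^{\frac12 p'^0(s)}e^{\frac12 q'^0(s)}e^{-\frac12 q^0(s)}$; the first three exponentials together with the corresponding weights turn $\partial_\beta f(p_*)$, $(\partial_I f)(p_*')$, $(\partial_J f)(q_*')$ into $F_\beta$, $F_I$, $F_J$, while $e^{-\frac12 q^0(s)}$ absorbs $(q^0)^M$ at the cost of a weaker exponential still integrable against $dq_*$ with an $R^3$ bound by Lemma~\ref{lem_R}. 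Cauchy--Schwarz in $(p_*,q_*,\omega)$ then separates a factor $\le CR^{3/2}(s)\|\partial_\beta f(s)\|_k$ (using $(p^0q^0)^{-1}\le1$ and Lemma~\ref{lem_R}) from a factor that, after the change of variables $(p^0q^0)^{-1}dp_*dq_*=(p'^0q'^0)^{-1}dp_*'dq_*'$ and $(p^0)^{-1}\le1$, is $\le C\|f(s)\|_{k,|\beta|}^2$; together with $R^{-3}(s)$ this bounds the gain contribution by $C\int_0^t R^{-3/2}(s)\|f(s)\|_{k,|\beta|}^3\,ds\le C\sup_{s\in[0,t]}\|f(s)\|_{k,|\beta|}^3$, since $R^{-3/2}$ is integrable by Lemma~\ref{lem_flrw}. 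For the loss term, the diagonal piece in which all $\beta$ derivatives fall on $f(p_*)$, namely $-(p^0q^0\sqrt s)^{-1}(\partial_\beta f)^2(p_*)f(q_*)$, has the favourable sign for a distribution function $f\ge0$ and is discarded as in Lemma~\ref{lem_f1} (or is bounded directly using $\int|f(q_*)|(q^0)^{-1}dq_*\le CR^{3/2}\|f\|_0$); every other piece carries at least one derivative on $(p^0\sqrt s)^{-1}$ and is bounded by $CR^{-3}(p^0q^0)^{-1}(q^0)^{M'}|\partial_\beta f(p_*)|\,|\partial_{\beta_2}f(p_*)|\,|f(q_*)|$ with $|\beta_2|<|\beta|$, which is handled exactly as the gain term (this case is simpler, as no post-collision momenta appear). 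Adding the two contributions gives the stated inequality.

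I expect the main obstacle to be the power-counting bookkeeping forced by the chain rule. Differentiating $f(p_*')$ and $f(q_*')$ necessarily brings in the high-order derivatives of the post-collision momenta, whose only bounds (Lemma~\ref{lem_p'}) grow polynomially in $q^0$, together with the polynomial-in-$q^0$ growth of the derivatives of $s^{-1/2}$; one must check that all of these are defeated by the Gaussian $e^{-\frac12 q^0(s)}$ that energy conservation makes available, and at the same time that the many powers of $R$ picked up along the way (the $R^{-3}$ of the equation, the $R^{-|\gamma|}$ and $R^{1-|\gamma|}$ of the derivative lemmas, and the $R^3$ from the $q_*$-integration) recombine into the integrable weight $R^{-3/2}$. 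Once the distribution of the weight $\langle p_*\rangle^{2k}$ (through Lemma~\ref{lem_pp'q'}) and of the exponential (through energy conservation) is correctly arranged, the remaining estimates are exactly those already carried out in Lemma~\ref{lem_f1}.
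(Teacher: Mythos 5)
Your proposal is correct and follows essentially the same route as the paper: differentiate the equation with $\partial_\beta$, use Leibniz and Fa\`{a} di Bruno together with Lemmas \ref{lem_partial_1/p0}, \ref{lem_partial_1/s}, \ref{lem_p'} to bound the kernel derivatives and the chain-rule factors by a power of $q^0$ times a nonpositive power of $R$, absorb the polynomial in $q^0$ into $e^{-\frac12 q^0(s)}$ via energy conservation, distribute $\langle p_*\rangle^{2k}$ with Lemma \ref{lem_pp'q'}, and close with Cauchy--Schwarz, the Lorentz-invariant change of variables, and Lemma \ref{lem_R}, which produces the integrable weight $R^{-3/2}$. The only cosmetic difference from the paper's proof is in the loss term: you discard the favourably-signed diagonal piece $-(p^0q^0\sqrt s)^{-1}(\partial_\beta f)^2 f(q_*)$, whereas the paper simply keeps every piece in absolute value and bounds it by the same $\sup\|f\|_{k,|\beta|}^3$; both are fine and yield the stated inequality.
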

\begin{proof}
For a multi-index $\beta\neq 0$, we take $\partial_\beta$ to the Boltzmann equation,
\begin{align*}
\partial_t\partial_\beta f&=R^{-3}\sum\iint \partial_{\beta_0}\bigg[\frac{1}{p^0q^0\sqrt{s}}\bigg] \partial_{\beta_1}\Big[f(p_*')\Big]\partial_{\beta_2}\Big[f(q_*')\Big]d\omega dq_*\\
&\quad -R^{-3}\sum\iint \partial_{\beta_0}\bigg[\frac{1}{p^0q^0\sqrt{s}}\bigg] \partial_{\beta_1}f(p_*)f(q_*)d\omega dq_*,
\end{align*}
where the summations are finite and taken for some $\beta_0$, $\beta_1$, and $\beta_2$ satisfying $\beta_0+\beta_1+\beta_2=\beta$ or $\beta_0+\beta_1=\beta$. Multiply $2\partial_\beta f$ and take integration on $[0,t]$ to obtain
\begin{align*}
&(\partial_\beta f)^2(t,p_*)=(\partial_\beta f)^2(0,p_*)\\
&\quad +2\int_0^tR^{-3}(s)\partial_\beta f(s,p_*)\sum\iint \partial_{\beta_0}\bigg[\frac{1}{p^0q^0\sqrt{s}}\bigg] \partial_{\beta_1}\Big[f(p_*')\Big]\partial_{\beta_2}\Big[f(q_*')\Big]d\omega dq_*ds\\
&\quad -2\int_0^tR^{-3}(s)\partial_\beta f(s,p_*)\sum\iint \partial_{\beta_0}\bigg[\frac{1}{p^0q^0\sqrt{s}}\bigg] \partial_{\beta_1}f(p_*)f(q_*)d\omega dq_*ds.
\end{align*}
As in the previous lemma, we multiply $\langle p_*\rangle^{2k} e^{p^0(t)}$ to the above equation, and use the decreasing property of $p^0(t)$ and Lemma \ref{lem_pp'q'} as follows:
\begin{align}
&\langle p_*\rangle^{2k} e^{p^0(t)}(\partial_\beta f)^2(t,p_*)\leq \langle p_*\rangle^{2k} e^{p^0(0)}(\partial_\beta f)^2(0,p_*)\nonumber\\
&\quad +C\sum\int_0^tR^{-3}(s)\langle p_*\rangle^{k} e^{\frac12 p^0(s)}|\partial_\beta f(s,p_*)|\nonumber\\
&\qquad\times\iint \bigg|\partial_{\beta_0}\bigg[\frac{1}{p^0q^0\sqrt{s}}\bigg]\bigg| e^{-\frac12 q^0(s)}\langle p_*'\rangle^{k} e^{\frac12 p'^0(s)}\Big|\partial_{\beta_1}\Big[f(p_*')\Big]\Big|\langle q_*'\rangle^{k} e^{\frac12 q'^0(s)}\Big|\partial_{\beta_2}\Big[f(q_*')\Big]\Big|d\omega dq_*ds\allowdisplaybreaks\nonumber\\
&\quad +C\sum\int_0^tR^{-3}(s)\langle p_*\rangle^{k} e^{\frac12 p^0(s)}|\partial_\beta f(s,p_*)|\nonumber\\
&\qquad\times\iint \bigg|\partial_{\beta_0}\bigg[\frac{1}{p^0q^0\sqrt{s}}\bigg]\bigg|e^{-\frac12 q^0(s)}\langle p_*\rangle^{k} e^{\frac12 p^0(s)}| \partial_{\beta_1}f(p_*)|\langle q_*\rangle^{k} e^{\frac12 q^0(s)}f(q_*)d\omega dq_*ds.\label{est_partial_f1}
\end{align}
To estimate the right hand side of \eqref{est_partial_f1}, we first consider the partial derivatives. Since $p_*/(Rp^0)$, $q_*/(Rq^0)$, $1/s$, and $1/p^0$ are bounded quantities, we obtain by Lemma \ref{lem_partial_1/p0} and \ref{lem_partial_1/s} for a multi-index $I$,
\begin{align}\label{est_partial_f2}
\bigg|\partial_{I}\bigg[\frac{1}{p^0q^0\sqrt{s}}\bigg]\bigg|\leq \sum_{I_1+I_2=I}\frac{C(q^0)^{|I|}}{R^{|I_{1}|}(p^0)^{|I_1|+1} q^0 R^{|I_2|}\sqrt{s}}\leq \frac{C(q^0)^{|I|}}{p^0q^0},
\end{align}
where we simply ignored $R$ and $\sqrt{s}$. To estimate $\partial_{\beta_1}[f(p_*')]$, we apply Faa di Bruno's formula as in \cite{ConstantineSavits96,GuoStrain12}. Applying the main theorem of \cite{ConstantineSavits96} to our case, we obtain for a multi-index $I\neq 0$,
\[
\partial_I\Big[f(p_*')\Big]=\sum (\partial_{I_1}f)(p_*')[\partial_{J_1}p_*']^{K_1}\cdots[\partial_{J_s}p_*']^{K_s},
\]
which is a finite sum for multi-indices $I_1$, $J_i$, and $K_i$ such that $1\leq|I_1|\leq |I|$, $|J_i|\geq 1$, $|K_i|\geq 1$, and $|I|=\sum_{i=1}^s|J_i||K_i|$, and $[\partial_{J_i}p_*']^{K_i}$ is understood as $(\partial_{J_i}p_1')^{k_{i1}}(\partial_{J_i}p_2')^{k_{i2}}(\partial_{J_i}p_3')^{k_{i3}}$ for $K_i=(k_{i1},k_{i2},k_{i3})$. Applying Lemma \ref{lem_p'}, we have
\[
\Big|[\partial_{J_1}p_*']^{K_1}\cdots[\partial_{J_s}p_*']^{K_s}\Big|\leq C\prod_{i=1}^s\Big(R^{1-|J_i|}(q^0)^{|J_i|+4}\Big)^{|K_i|}\leq C(q^0)^{5|I|},
\]
where we used $|J_i|\geq 1$ and $|I|=\sum_{i=1}^s|J_i||K_i|$. Hence, we obtain
\begin{align}\label{est_partial_f3}
\Big|\partial_I\Big[f(p_*')\Big]\Big|\leq C(q^0)^{5|I|}\sum |(\partial_{I_1}f)(p_*')|,
\end{align}
where the summation is over $1\leq|I_1|\leq |I|$, and the same estimate for $\partial_I[f(q_*')]$. Note that the estimate \eqref{est_partial_f3} still holds for $I=0$, in which case we understand $I_1=0$.

We are now ready to estimate \eqref{est_partial_f1}. Applying \eqref{est_partial_f2} and \eqref{est_partial_f3} to the integrands in the second and the third terms of \eqref{est_partial_f1}, we obtain
\begin{align}\label{est_partial_f4}
&\langle p_*\rangle^{2k} e^{p^0(t)}(\partial_\beta f)^2(t,p_*)\leq \langle p_*\rangle^{2k} e^{p^0(0)}(\partial_\beta f)^2(0,p_*)\nonumber\\
&\quad +C\sum\int_0^tR^{-3}(s)\langle p_*\rangle^{k} e^{\frac12 p^0(s)}|\partial_\beta f(s,p_*)|\nonumber\\
&\qquad\times\iint \frac{(q^0)^{5|\beta|}}{p^0q^0} e^{-\frac12 q^0(s)}\langle p_*'\rangle^{k} e^{\frac12 p'^0(s)}|(\partial_{I_1}f)(p_*')|\langle q_*'\rangle^{k} e^{\frac12 q'^0(s)}|(\partial_{I_2}f)(q_*')|d\omega dq_*ds\allowdisplaybreaks\nonumber\\
&\quad +C\sum\int_0^tR^{-3}(s)\langle p_*\rangle^{k} e^{\frac12 p^0(s)}|\partial_\beta f(s,p_*)|\nonumber\\
&\qquad\times\iint \frac{(q^0)^{|\beta_0|}}{p^0q^0}e^{-\frac12 q^0(s)}\langle p_*\rangle^{k} e^{\frac12 p^0(s)}| \partial_{\beta_1}f(p_*)|\langle q_*\rangle^{k} e^{\frac12 q^0(s)}f(q_*)d\omega dq_*ds,
\end{align}
where the summation of the second term is over some $I_1$ and $I_2$ satisfying $|I_1|+|I_2|\leq |\beta|$. The estimate is now almost the same with that of Lemma \ref{lem_f1}. Integrating the above inequality with respect to $p_*$, we obtain $\|\partial_\beta f(t)\|_k^2$ and $\|\partial_\beta f(0)\|_k^2$ from the left side and the first term on the right side, respectively. To estimate the second term, we write $F_{\beta}(s,p_*)=\langle p_*\rangle^{k} e^{\frac12 p^0(s)}|\partial_\beta f(s,p_*)|$, and consider for each $s$,
\begin{align*}
&\int F_\beta(p_*)\iint \frac{(q^0)^{5|\beta|}}{p^0q^0} e^{-\frac12 q^0(s)}F_{I_1}(p_*')F_{I_2}(q_*')d\omega dq_*dp_*\\
&\leq C\bigg(\iiint F^2_\beta(p_*)(q^0)^{10|\beta|}e^{-q^0(s)}d\omega dq_* dp_*\bigg)^{\frac12} \bigg(\iiint\frac{1}{p^0q^0}F^2_{I_1}(p_*')F^2_{I_2}(q_*')d\omega dq_*dp_*\bigg)^{\frac12}\allowdisplaybreaks\\
&\leq C\|F_\beta(s)\|_{L^2}\bigg(\int (q^0)^{10|\beta|}e^{-q^0(s)}dq_*\bigg)^{\frac12} \bigg(\iiint\frac{1}{p^0q^0}F^2_{I_1}(p_*)F^2_{I_2}(q_*)d\omega dq_*dp_*\bigg)^{\frac12}\\
&\leq CR^{\frac32}(s)\|F_\beta(s)\|_{L^2}\|F_{I_1}(s)\|_{L^2}\|F_{I_2}(s)\|_{L^2},
\end{align*}
where we used $(p^0q^0)^{-1}dp_*dq_*=(p'^0q'^0)^{-1}dp_*'dq_*'$ and Lemma \ref{lem_R}. This shows that the second term of \eqref{est_partial_f4} is estimated as
\begin{align*}
\sum\int_0^tR^{-\frac32}(s)\|F_\beta(s)\|_{L^2}\|F_{I_1}(s)\|_{L^2}\|F_{I_2}(s)\|_{L^2}ds
\leq C\sup_{s\in[0,t]}\bigg(\sum_{|\beta_1|\leq|\beta|}\|F_{\beta_1}(s)\|_{L^2}\bigg)^3,
\end{align*}
since $R^{-\frac32}$ is integrable. In a similar way we estimate the loss term and obtain the same upper bound as in the gain term. We skip the estimate of the loss term, and conclude that $f$ satisfies
\[
\|\partial_\beta f(t)\|_k^2\leq \|\partial_\beta f(0)\|_k^2+C\sup_{s\in[0,t]}\|f(s)\|^3_{k,|\beta|},
\]
and this completes the proof.
\end{proof}

\begin{remark}
Lemma \ref{lem_f1} and \ref{lem_f2} show that if $f$ is a local-in-time solution of the Boltzmann equation, then it satisfies for any $N\geq 0$ and $k\geq 0$,
\begin{equation}
\|f(t)\|^2_{k,N}\leq \|f(0)\|^2_{k,N}+C\sup_{s\in[0,t]}\|f(s)\|_{k,N}^3,\label{remark_1}
\end{equation}
on a (short) time interval. This proves that the solution is extended to a global-in-time solution, if initial data is given such that $\|f(0)\|^2_{k,N}$ is sufficiently small.
\end{remark}

\section{Main result}\label{sec_main}
\subsection{Global-in-time existence for the Boltzmann equation}
Local-in-time existence is proved by a standard iteration method. Let us consider the following iteration:
\begin{align*}
\partial_tf^{n+1}=R^{-3}\int_{\bbr^3}\int_{\bbs^2} v_M\sigma(h,\theta)\Big(f^n(p_*')f^n(q_*')-f^{n+1}(p_*)f^n(q_*)\Big)d\omega dq_*,
\end{align*}
with $f^{n+1}(0,p_*)=f(0,p_*)$ and $f^0(t,p_*)=f(0,p_*)$, and obtain a sequence $\{f^n\}_{n=0}^\infty$. As in Lemma \ref{lem_f1} and \ref{lem_f2}, we take a partial derivative $\partial_\beta$ for $\beta\geq 0$ and multiply $2\partial_\beta f^{n+1}$ to the above equation. To the equation obtained, we first take integration on $[0,t]$, and then multiply $\langle p_*\rangle^{2k}e^{p^0(t)}$ and use the decreasing property of $p^0(t)$ to obtain the following inequality:
\begin{align*}
&\langle p_*\rangle^{2k} e^{p^0(t)}(\partial_\beta f^{n+1})^2(t,p_*)\leq \langle p_*\rangle^{2k} e^{p^0(0)}(\partial_\beta f)^2(0,p_*)\nonumber\\
&\quad +C\sum\int_0^tR^{-3}(s)\langle p_*\rangle^{k} e^{\frac12 p^0(s)}|\partial_\beta f^{n+1}(s,p_*)|\nonumber\\
&\qquad\times\iint \bigg|\partial_{\beta_0}\bigg[\frac{1}{p^0q^0\sqrt{s}}\bigg]\bigg| e^{-\frac12 q^0(s)}\langle p_*'\rangle^{k} e^{\frac12 p'^0(s)}\Big|\partial_{\beta_1}\Big[f^n(p_*')\Big]\Big|\langle q_*'\rangle^{k} e^{\frac12 q'^0(s)}\Big|\partial_{\beta_2}\Big[f^n(q_*')\Big]\Big|d\omega dq_*ds\allowdisplaybreaks\nonumber\\
&\quad +C\sum\int_0^tR^{-3}(s)\langle p_*\rangle^{k} e^{\frac12 p^0(s)}|\partial_\beta f^{n+1}(s,p_*)|\nonumber\\
&\qquad\times\iint \bigg|\partial_{\beta_0}\bigg[\frac{1}{p^0q^0\sqrt{s}}\bigg]\bigg|e^{-\frac12 q^0(s)}\langle p_*\rangle^{k} e^{\frac12 p^0(s)}| \partial_{\beta_1}f^{n+1}(p_*)|\langle q_*\rangle^{k} e^{\frac12 q^0(s)}f^n(q_*)d\omega dq_*ds.
\end{align*}
Integrate the above inequality to obtain $\|\partial_\beta f^{n+1}(t)\|_k^2$ and $\|\partial_\beta f(0)\|_k^2$ from the left side and the first quantity on the right side, respectively. Following the proofs of Lemma \ref{lem_f1} and \ref{lem_f2} and summing over $|\beta|\leq N$, we obtain the following estimate:
\begin{align*}
&\|f^{n+1}(t)\|_{k,N}^2\\ &\quad \leq \|f(0)\|_{k,N}^2
 +C\sup_{s\in[0,t]}\|f^{n+1}(s)\|_{k,N} \|f^{n}(s)\|^2_{k,N} +C\sup_{s\in[0,t]}\|f^{n+1}(s)\|^2_{k,N} \|f^{n}(s)\|_{k,N}\\
&\quad \leq \|f(0)\|_{k,N}^2+C\sup_{s\in[0,t]} \|f^{n}(s)\|^3_{k,N} +C\sup_{s\in[0,t]}\|f^{n+1}(s)\|^2_{k,N} \|f^{n}(s)\|_{k,N},
\end{align*}
where we used a simple inequality $ab^2\leq a^2b+b^3$ for positive $a$ and $b$. Suppose now that there exists a positive $M$ such that $\|f(0)\|^2_{k,N}\leq M/2$ and $\|f^{n}(t)\|^2_{k,N}\leq M$ on a time interval $[0,T]$. Then, we have
\begin{align*}
&\|f^{n+1}(t)\|_{k,N}^2\leq M/2 +CM^{\frac32} +C\sqrt{M}\sup_{s\in[0,t]}\|f^{n+1}(s)\|^2_{k,N},
\end{align*}
which shows that if $M$ is sufficiently small, then $\|f^{n+1}(t)\|^2_{k,N}\leq M$ on $[0,T]$, and we conclude that $f^n$ is bounded uniformly on $n$. Taking limit $n\to\infty$, we have a local-in-time solution such that $\|f(t)\|^2_{k,N}\leq M$ on $[0,T]$, and the inequality \eqref{remark_1} now proves that if $\|f(0)\|^2_{k,N}$ is sufficiently small, then the solution exists globally in time. Non-negativity of solutions is guaranteed by the iteration, and uniqueness is easily proved as in \cite{Guo03}. For more details, we refer to \cite{Guo03,GuoStrain12}.

\begin{thm}
Consider a spatially flat FLRW spacetime where the scale factor $R=R(t)$ is given by an increasing function with an exponential rate. Let $f(0)=f(0,p_*)$ be an initial data of the Boltzmann equation \eqref{Boltzmann} such that $\|f(0)\|^2_{k,N}$ is bounded with $k\geq 0$ and $N\geq 3$. Then, there exists an $\varepsilon>0$ such that if $\|f(0)\|^2_{k,N}<\varepsilon$, then the corresponding solution exists globally in time such that
\[
\sup_{0\leq t<\infty}\|f(t)\|^2_{k,N}\leq C\varepsilon,
\]
for some constant $C>0$.
\end{thm}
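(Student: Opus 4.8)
The plan is to combine a Picard iteration for local existence with the a priori energy inequality \eqref{remark_1} and a continuity (bootstrap) argument to propagate smallness for all time. Note first that the hypothesis on $R$ is exactly what is supplied by Lemma~\ref{lem_flrw}, and it is all that the earlier estimates needed: every place where $\int_0^\infty R^{-3/2}(s)\,ds<\infty$ was used (the time integrals in Lemmas~\ref{lem_f1} and \ref{lem_f2}) is justified, so those estimates — and the resulting inequality \eqref{remark_1} — apply verbatim to any solution of \eqref{Boltzmann} in the present setting.

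Local existence: I would run the iteration $\{f^n\}$ displayed above the theorem, with $f^{n+1}(0)=f(0)$, $f^0\equiv f(0)$, and apply the computations of Lemmas~\ref{lem_f1}--\ref{lem_f2} to each $f^{n+1}$. Summing over $|\beta|\le N$ this gives, for $k\ge 0$ and $N\ge 3$,
\[
\|f^{n+1}(t)\|_{k,N}^2\le \|f(0)\|_{k,N}^2 + C\sup_{[0,t]}\|f^n\|_{k,N}^3 + C\sup_{[0,t]}\|f^{n+1}\|_{k,N}^2\,\|f^n\|_{k,N},
\]
and then a bootstrap in $n$: if $\|f(0)\|_{k,N}^2\le M/2$ and $\|f^n\|_{k,N}^2\le M$ on $[0,T]$ with $M$ small, then also $\|f^{n+1}\|_{k,N}^2\le M$ on $[0,T]$, so the sequence is uniformly bounded. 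To extract a limit I would derive a contraction estimate for the differences $f^{n+1}-f^n$ in the one-order-lower norm $\|\cdot\|_{k,N-1}$ (the same trilinear machinery, with the uniform bound controlling the coefficients), obtaining a Cauchy sequence whose limit $f$ is a local solution with $\|f(t)\|_{k,N}^2\le M$ on $[0,T]$; it is non-negative because each $f^n$ is built from non-negative data by a gain/loss splitting, unique by the standard argument, and classical since $N\ge 3$ makes $f(t,\cdot)\in C^1(\R^3)$ by Sobolev embedding.

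Global existence and decay bound: given $\varepsilon>0$ with $\|f(0)\|_{k,N}^2<\varepsilon$, set $T^\star=\sup\{T>0:\ \sup_{[0,T]}\|f(t)\|_{k,N}^2\le 2\varepsilon\}$, which is positive by the local theory and the continuity of $t\mapsto\|f(t)\|_{k,N}$. On $[0,T^\star)$, inequality \eqref{remark_1} yields
\[
\|f(t)\|_{k,N}^2\le \varepsilon + C\Big(\sup_{[0,t]}\|f(s)\|_{k,N}^2\Big)^{3/2}\le \varepsilon + C(2\varepsilon)^{3/2}.
\]
Choosing $\varepsilon$ small enough that $C(2\varepsilon)^{3/2}\le \tfrac12\varepsilon$ (i.e.\ $\varepsilon\le (4\sqrt2\,C)^{-2}$) gives $\|f(t)\|_{k,N}^2\le \tfrac32\varepsilon<2\varepsilon$ on $[0,T^\star)$, and hence at $t=T^\star$ by continuity. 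If $T^\star<\infty$, restarting the local existence theorem from $T^\star$ with the small datum $f(T^\star)$ extends the solution and keeps $\|f\|_{k,N}^2\le 2\varepsilon$ on $[0,T^\star+\delta)$, contradicting maximality; therefore $T^\star=\infty$ and $\sup_{0\le t<\infty}\|f(t)\|_{k,N}^2\le \tfrac32\varepsilon\le C\varepsilon$.

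I expect no deep obstacle in the theorem itself, because the genuinely delicate analysis has already been carried out in the preliminaries: the uniformly-in-$p$ bounds on the derivatives of the post-collision momenta in the representation \eqref{p'} (Lemma~\ref{lem_p'}), the weight-gain estimate (Lemma~\ref{lem_pp'q'}), and the $R^3$-growth bound (Lemma~\ref{lem_R}) are precisely what let the trilinear term close with an integrable-in-time remainder. The only points requiring real care are (i) the contraction step in $\|\cdot\|_{k,N-1}$ for the iteration — in particular checking that the polynomial weights $(q^0)^{5|\beta|}$ produced by Faà di Bruno are absorbed uniformly in time by the Gaussian $e^{-q^0/2}$, which is exactly the role of Lemma~\ref{lem_R} — and (ii) the bookkeeping verifying that $N\ge 3$ suffices both for closing the energy hierarchy and for the limiting $f$ to be a genuine classical solution of \eqref{Boltzmann}.
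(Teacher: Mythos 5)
Your proposal is correct and follows essentially the same route as the paper: the same iteration scheme, the same trilinear estimate and Young's inequality $ab^2\le a^2b+b^3$ to close the iterate bound, uniform boundedness on $n$, passage to a limit (the paper defers the contraction step to \cite{Guo03,GuoStrain12}, as you anticipate), and then the a priori inequality \eqref{remark_1} to propagate smallness globally. Your explicit continuity/bootstrap argument with $T^\star$ simply spells out what the paper leaves implicit in the sentence ``the inequality \eqref{remark_1} now proves that if $\|f(0)\|^2_{k,N}$ is sufficiently small, then the solution exists globally in time''; one minor point worth making explicit in the restart step is that Lemma~\ref{lem_f1}--\ref{lem_f2} rely on the weight $e^{p^0(t)}$ being nonincreasing in $t$, which makes the energy inequality uniform in the choice of the initial time and hence makes the restart legitimate.
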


\begin{remark}
The Boltzmann equation is written as in an orthonormal frame,
\begin{align}
\partial_t\hat{f}-(\dot{R}\hat{p}/R)\cdot\nabla_{\hat{p}}\hat{f}=Q(\hat{f},\hat{f}),\label{Boltmzann_ortho}
\end{align}
where $\hat{f}=\hat{f}(t,\p)$ denotes that the distribution function is written in an orthonormal frame to have variables $t$ and $\p=R^{-1}p_*$. Let $f$ be a solution of the Boltzmann equation \eqref{Boltzmann} constructed in the previous theorem. Since $N\geq 3$, $f$ is of $C^1$ in $p_*$, hence the equations \eqref{Boltzmann} and \eqref{Boltmzann_ortho} are equivalent, and $\hat{f}(t,\p)=f(t,p_*)$ solves the equation \eqref{Boltmzann_ortho}. Since $f(t,p_*)\leq C\varepsilon \langle p_*\rangle^{-k}e^{-\frac{1}{2}p^0(t)}$, we now have the asymptotic behaviour as follows:
\[
\hat{f}(t,\p)\leq C\varepsilon(1+R^2(t)|\p|^2)^{-\frac{k}{2}}e^{-\frac{1}{2}p^0},
\]
where $p^0=\sqrt{1+|\p|^2}$ is independent of $t$ in an orthonormal frame.
\end{remark}

\section{Summary and outlook}\label{sec_summary}
The result of an accelerated expansion has been obtained for a given distribution function in Section \ref{sec_flrw}. The existence of classical solutions to the Boltzmann equation has been shown in Section \ref{sec_main} and presents the core of this paper. One may now consider the coupled Einstein-Boltzmann system, where the equations \eqref{Friedmann}--\eqref{f2} are coupled to the Boltzmann equation \eqref{Boltzmann} through the energy density and the pressure, which are given by
\begin{align*}
\rho &=  R^{-3} \int_{\bbr^3} f(t,p_*)p^0dp_*,\\
P &= R^{-5} \int_{\bbr^3} f(t,p_*) \frac{|p_*|^2}{3p^0}dp_*,
\end{align*}
where the distribution function again is understood as a function of time $t$ and covariant momenta $p_*=(p_1,p_2,p_3)$. Since we have considered particles of unit rest mass, which are future pointing, i.e., $p_\alpha p_\beta g^{\alpha\beta}= -1$ and $p^0 >0$, and the distribution function is non-negative, it is easy to see that the matter, described by the Boltzmann equation, satisfies the weak and the dominant energy conditions. In this sense, Lemma \ref{lem_flrw} applies to the coupled system as well, and in the present paper we only remark that a standard iteration method will give us the same results.

Using a new representation for the collision operator and the post-collision momentum we have obtained future global existence of classical solutions for Israel particles and shown that they have an asymptotic behaviour in a spatially flat FLRW spacetime with a positive cosmological constant. There are not so many results concerning solutions to the Einstein-Boltzmann system, and moreover most of them impose artificial restrictions on the scattering kernel for technical reasons. We have been able to obtain our results without any artificial restrictions. The results have been achieved however only for small solutions in the presence of a positive cosmological constant, which has been used in a crucial way. In any case the results obtained open the possibility to explore in more detail scattering kernels which are physically relevant. It is clear that the Universe is not exactly isotropic, so it will be of interest to consider more general models such as the homogeneous, but anisotropic Bianchi I spacetime. Further it should be possible to consider forever expanding Bianchi spacetimes as was done for the Vlasov case in \cite{Lee04}. Finally, we remark that on the basis of the monograph \cite{Hans} (cf.\ in particular p.\ 154--156 and Appendix F) it is of interest to consider stability in the general case.

\section*{Appendix}
In this part we derive the representation of the collision operator that we used in this paper. The main idea is to use a Lorentz transform as in \cite{Strain11}, and calculation is slightly modified. Let us consider the Minkowski case, i.e., $g_{\alpha\beta}=\eta_{\alpha\beta}$, and suppose that $p^\alpha$ and $q^\alpha$ are pre-collision momenta satisfying the mass shell condition. Let $\Lambda$ be a Lorentz transform satisfying
\[
\Lambda^\alpha_\beta n^\beta=(\sqrt{s},0,0,0),\quad \Lambda^\alpha_\beta \Lambda^\gamma_\delta \eta_{\alpha\gamma}=\eta_{\beta\delta},
\]
where $n^\alpha=p^\alpha+q^\alpha$. For simplicity, we use tildes to denote transformed quantities, for  instance $\tilde{p}^\alpha=\Lambda^\alpha_\beta p^\beta$, $\tilde{p}'^\alpha=\Lambda^\alpha_\beta p'^\beta$, and so on. Since $\Lambda$ is a Lorentz transform, the mass shell condition still holds for pre- and post-collision momenta, i.e., $\tilde{p}_\alpha\tilde{p}^\alpha=\tilde{q}_\alpha\tilde{q}^\alpha=\tilde{p}'_\alpha\tilde{p}'^\alpha=\tilde{q}'_\alpha\tilde{q}'^\alpha=-1$, which derive
\[
\tilde{p}^0=\sqrt{1+|\tilde{p}|^2},\quad\tilde{p}'^0=\sqrt{1+|\tilde{p}'|^2},
\]
and similar representations for $\tilde{q}^0$ and $\tilde{q}'^0$. Since $\tilde{n}^0=\sqrt{s}$ and $\tilde{n}^k=0$, we have $\tilde{p}^0=\tilde{q}^0=\sqrt{s}/2$ and $|\tilde{p}|=|\tilde{q}|=h/2$. The collision operator is written as
\[
Q(f,f)=\iiint W(p,q,p',q')G(p,q,p',q')\frac{dp'dq'dq}{p'^0q'^0p^0q^0},
\]
where $G$ denotes $f(p')f(q')-f(p)f(q)$. The transition rate $W$ is given by
\[
W(p,q,p',q')=\frac{s}{2}\sigma(h,\theta)\delta^{(4)}(p^\alpha+q^\alpha-p'^\alpha-q'^\alpha),
\]
where $\delta^{(4)}$ is the four-dimensional Dirac delta function. For given $p$ and $q$, let us consider
\begin{align*}
I&=\iint W(p,q,p',q')G(p,q,p',q')\frac{dp'dq'}{p'^0q'^0}\\
&=\iint\frac{s\sigma}{2}\delta^{(4)}(p^\alpha+q^\alpha-p'^\alpha-q'^\alpha)G(\cdots)\frac{dp'dq'}{p'^0q'^0}.
\end{align*}
Since $(p'^0)^{-1}dp'$ and $(q'^0)^{-1}dq'$ are Lorentz invariant, we have
\begin{align*}
I&=\iint\frac{s\sigma}{2}\delta^{(4)}(\tilde{p}^\alpha+\tilde{q}^\alpha-\tilde{p}'^\alpha-\tilde{q}'^\alpha)G(\cdots)\frac{d\tilde{p}'d\tilde{q}'}{\tilde{p}'^0\tilde{q}'^0}.
\end{align*}
Since $\tilde{p}^k+\tilde{q}^k=0$, the quantity $\delta^{(3)}(\tilde{p}^k+\tilde{q}^k-\tilde{p}'^k-\tilde{q}'^k)$ derives $\tilde{q}'^k=-\tilde{p}'^k$, and the mass shell condition shows that $\tilde{p}'^0=\tilde{q}'^0$. Since we already have $\tilde{p}^0=\tilde{q}^0$, the quantity $I$ reduces to
\begin{align*}
I&=\int\frac{s\sigma}{2}\delta(2\tilde{p}^0-2\tilde{p}'^0)G(\cdots)\frac{d\tilde{p}'}{(\tilde{p}'^0)^2}
=\int\frac{s\sigma}{4}\delta(\tilde{p}'^0-\tilde{p}^0)G(\cdots)\frac{d\tilde{p}'}{(\tilde{p}'^0)^2}.
\end{align*}
To compute the quantity $\delta(\tilde{p}'^0-\tilde{p}^0)$, we take the translation $\bar{p}'^k=\tilde{p}'^k-\tilde{p}^k$ for $k=1,2,3$, and use the mass shell conditions to have
\begin{align*}
\tilde{p}'^0-\tilde{p}^0&=\frac{|\tilde{p}'|^2-|\tilde{p}|^2}{\tilde{p}'^0+\tilde{p}^0}=\frac{(\tilde{p}'+\tilde{p})\cdot(\tilde{p}'-\tilde{p})}{\tilde{p}'^0+\tilde{p}^0}=\frac{(\bar{p}'+2\tilde{p})\cdot\bar{p}'}{\tilde{p}'^0+\tilde{p}^0}.
\end{align*}
We write $\bar{p}'=|\bar{p}'|\omega$ for $\omega\in\bbs^2$ to obtain
\begin{align*}
\delta(\tilde{p}'^0-\tilde{p}^0)&=\delta\bigg(\frac{|\bar{p}'|(|\bar{p}'|+2\tilde{p}\cdot\omega)}{\tilde{p}'^0+\tilde{p}^0}\bigg)=\frac{(\tilde{p}'^0+\tilde{p}^0)}{|2\tilde{p}\cdot\omega|}\Big(\delta(|\bar{p}'|)+\delta(|\bar{p}'|+2\tilde{p}\cdot\omega)\Big),
\end{align*}
where we used Lemma 1.3.1 of \cite{Glassey}. Hence, we have
\begin{align*}
I&=\int\frac{s\sigma}{4}\frac{(\tilde{p}'^0+\tilde{p}^0)}{|2\tilde{p}\cdot\omega|}\Big(\delta(|\bar{p}'|)+\delta(|\bar{p}'|+2\tilde{p}\cdot\omega)\Big)G(\cdots)\frac{|\bar{p}'|^2d|\bar{p}'|d\omega}{(\tilde{p}'^0)^2}\\
&=\int\frac{s\sigma}{4}\frac{(\tilde{p}'^0+\tilde{p}^0)}{|2\tilde{p}\cdot\omega|}\delta(|\bar{p}'|+2\tilde{p}\cdot\omega)G(\cdots)\frac{|\bar{p}'|^2d|\bar{p}'|d\omega}{(\tilde{p}'^0)^2},
\end{align*}
where $\delta(|\bar{p}'|)$ vanishes by the quantity $|\bar{p}'|^2$. Note that $\tilde{p}'^0=\tilde{p}^0$, when $|\bar{p}'|=-2\tilde{p}\cdot\omega$, and therefore we have
\begin{align*}
I&=\int\frac{s\sigma}{4}\frac{2\tilde{p}^0}{2|\tilde{p}\cdot\omega|}G(\cdots)\frac{4|\tilde{p}\cdot\omega|^2d\omega}{(\tilde{p}^0)^2}=\int \frac{s\sigma|\tilde{p}\cdot\omega|}{\tilde{p}^0}G(\cdots)d\omega.
\end{align*}
Since $\tilde{p}=-\tilde{q}$ and $\tilde{p}^0=\sqrt{s}/2$, we obtain
\[
I=\int2\sqrt{s}|\tilde{q}\cdot\omega|\sigma(h,\theta)G(\cdots)d\omega.
\]
Note that $\tilde{q}$ is a transformed variable, and we consider the inverse of the Lorentz transform. By introducing a four-dimensional vector $\omega^\alpha=(0,\omega)$, we have the following representation:
\begin{align*}
\tilde{q}\cdot\omega=\eta_{\alpha\beta}\tilde{q}^\alpha\omega^\beta=\eta_{\alpha\beta}\Lambda^\alpha_\gamma q^\gamma\Lambda^\beta_\delta(\Lambda^{-1})^\delta_\mu\omega^\mu=\eta_{\gamma\delta}q^\gamma(\Lambda^{-1})^\delta_\mu\omega^\mu=q_\alpha\Omega^\alpha,
\end{align*}
where we defined $\Omega^\alpha=(\Lambda^{-1})^\alpha_\beta\omega^\beta$. The post-collision momentum is given by $\tilde{p}'^0=\tilde{p}^0$ and $\bar{p}'=|\bar{p}'|\omega$ for $\omega\in\bbs^2$. Since $\bar{p}'=\tilde{p}'-\tilde{p}$ and $|\bar{p}'|=-2\tilde{p}\cdot\omega=2\tilde{q}\cdot\omega$, we have $\tilde{p}'=\tilde{p}+2(\tilde{q}\cdot\omega)\omega$. Then, we can write $\tilde{p}'^\alpha=\tilde{p}^\alpha+2(\tilde{q}\cdot\omega)\omega^\alpha$. Taking the inverse of the Lorentz transform, we have
\begin{align}
p'^\alpha=p^\alpha+2(q_\beta\Omega^\beta)\Omega^\alpha,\quad q'^\alpha=q^\alpha-2(q_\beta\Omega^\beta)\Omega^\alpha.\label{p'_alpha}
\end{align}
Note also that $\Omega^\alpha$ is a unit spacelike vector and $n_\alpha\Omega^\alpha=\tilde{n}\cdot\omega=0$, which shows that it is orthogonal to $p^\alpha+q^\alpha$, hence we may write the post-collision momentum as $p'^\alpha=p^\alpha+((q_\beta-p_\beta)\Omega^\beta)\Omega^\alpha$. Then, we have $p'^\alpha-q'^\alpha=p^\alpha-q^\alpha+2((q_\beta-p_\beta)\Omega^\beta)\Omega^\alpha$ and multiply $p_\alpha-q_\alpha$ to obtain
\begin{align*}
h^2\cos\theta=(p_\alpha-q_\alpha)(p'^\alpha-q'^\alpha)=h^2-2((p_\beta-q_\beta)\Omega^\beta)^2,
\end{align*}
and therefore $((p_\alpha-q_\alpha)\Omega^\alpha)^2=h^2(1-\cos\theta)/2=h^2\sin^2(\theta/2)$. Since $2|\tilde{q}\cdot\omega|=|(p_\alpha-q_\alpha)\Omega^\alpha|$, we have the following representation for $I$:
\[
I=\int h\sqrt{s}\sin(\theta/2)\sigma(h,\theta)G(\cdots)d\omega.
\]
By abuse of notation we may redefine the scattering kernel as $4\sin(\theta/2)\sigma(h,\theta)$ and introduce the M{\o}ller velocity $v_M=h\sqrt{s}/(4p^0q^0)$ to obtain the following representation of $Q$:
\begin{align}
Q(f,f)=\int_{\bbr^3}\int_{\bbs^2} v_M\sigma(h,\theta)\Big(f(p')f(q')-f(p)f(q)\Big)d\omega dq.\label{Q_ortho}
\end{align}
For the Lorentz transform, we use the boost matrix as in \cite{Strain11}:
\begin{align*}
(\Lambda^{-1})^\alpha_\beta\omega^\beta=\Omega^\alpha=\bigg(\frac{(n\cdot\omega)}{\sqrt{s}},\omega+\frac{(n\cdot\omega)n}{\sqrt{s}(n^0+\sqrt{s})}\bigg).
\end{align*}
Then, the post-collision momentum is given by
\begin{align}
\left(
\begin{array}{c}
p'^0\\
p'^k
\end{array}
\right)=
\left(
\begin{array}{c}
\displaystyle
p^0+2\bigg(-q^0\frac{(n\cdot\omega)}{\sqrt{s}}+q\cdot\omega+\frac{(n\cdot\omega)(q\cdot n)}{\sqrt{s}(n^0+\sqrt{s})}\bigg)\frac{(n\cdot\omega)}{\sqrt{s}}\\
\displaystyle
p^k+2\bigg(-q^0\frac{(n\cdot\omega)}{\sqrt{s}}+q\cdot\omega+\frac{(n\cdot\omega)(q\cdot n)}{\sqrt{s}(n^0+\sqrt{s})}\bigg)
\bigg(\omega^k+\frac{(n\cdot\omega)n^k}{\sqrt{s}(n^0+\sqrt{s})}\bigg)
\end{array}
\right),\label{p'_ortho}
\end{align}
and a similar representation for $q'^\alpha$ is obtained. The representations \eqref{Q_ortho} and \eqref{p'_ortho} are generalized to the FLRW case as in \eqref{Boltzmann} and \eqref{p'}.

\section*{Acknowledgements}
H. Lee has been supported by the TJ Park Science Fellowship of POSCO TJ Park Foundation. This research was supported by Basic Science Research Program through the National Research Foundation of Korea(NRF) funded by the Ministry of Science, ICT \& Future Planning(NRF-2015R1C1A1A01055216). E.N. is currently funded by a Juan de la Cierva research fellowship from the Spanish government and this work has been partially supported by ICMAT Severo Ochoa project SEV-2015-0554 (MINECO).  The authors thank the Mathematisches Forschungsinstitut Oberwolfach for the hospitality and support.

\end{document}